\let\NAT@parse\undefined
\theoremstyle{definition}
\newtheorem{theorem}{Theorem}
\newtheorem{definition}{Definition}
\newtheorem{remark}{Remark}
\newtheorem{lemma}{Lemma}
\newtheorem{corollary}{Corollary}
\newtheorem{example}{Example}
\theoremstyle{definition}
\title{\LARGE \bf A Global Coordinate-Free Approach to Invariant Contraction on Homogeneous Manifolds}
\author{Akash Harapanahalli and Samuel Coogan%
\thanks{Akash Harapanahalli and Samuel Coogan are with the School of Electrical and Computer Engineering, Georgia Institute of Technology, Atlanta, GA, USA, 30318. \{\texttt{aharapan},\texttt{sam.coogan}\}\texttt{@gatech.edu}}%
}
\newcommand{\defi}[1]{\emph{#1}}
\newcommand{\ie}{\emph{i.e.}}
\newcommand{\<}{\langle}
\renewcommand{\>}{\rangle}
\newcommand{\so}{\mathfrak{so}}
\newcommand{\Ad}{\operatorname{Ad}}
\newcommand{\conj}{\operatorname{conj}}
\newcommand{\Hor}{\operatorname{Hor}}
\newcommand{\dangle}[1]{\langle\hspace{-0.2em}\langle #1 \rangle\hspace{-0.2em}\rangle}
\newcommand{\R}{\mathbb{R}}
\newcommand{\calA}{\mathcal{A}}
\newcommand{\calB}{\mathcal{B}}
\newcommand{\calL}{\mathcal{L}}
\newcommand{\bbS}{\mathbb{S}}
\newcommand{\bfu}{\mathbf{u}}
\newcommand{\ol}[1]{\overline{#1}}
\newcommand{\olu}{\ol{u}}
\newcommand{\olV}{\ol{V}}
\newcommand{\olX}{\ol{X}}
\newcommand{\olY}{\ol{Y}}
\newcommand{\frakg}{\mathfrak{g}}
\newcommand{\frakh}{\mathfrak{h}}
\newcommand{\frakm}{\mathfrak{m}}
\definecolor{dblue}{rgb}{.098,.243,.424}
\definecolor{dcompb}{RGB}{157,35,0}  %
\begin{document}

\maketitle
\thispagestyle{empty}
\pagestyle{empty}

\begin{abstract}

In this work, we provide a global condition for contraction with respect to an invariant Riemannian metric on reductive homogeneous spaces. 
Using left-invariant frames, vector fields on the manifold are horizontally lifted to the ambient Lie group, where the Levi-Civita connection is globally characterized as a real matrix multiplication.
By linearizing in these left-invariant frames, we characterize contraction using matrix measures on real square matrices, avoiding the use of local charts.
Applying this global condition, we provide a necessary condition for a prescribed subset of the manifold to possibly admit a contracting system, which accounts for the underlying geometry of the invariant metric.
Applied to the sphere, this condition implies that no great circle can be contained in a contraction region. 
Finally, we apply our results to compute reachable sets for an attitude control problem.

\end{abstract}

\section{INTRODUCTION}

Contraction theory provides a powerful suite of tools for analyzing nonlinear systems by analyzing the distance between pairs of trajectories; see~\cite{AminzareSontag:2014, Bullo:CTDS, DavydovBullo:2024} for recent surveys on the rich history of contraction analysis in dynamical systems.
Contraction is usually studied on vector spaces, from two main perspectives: i) after equipping the state space with a Riemannian structure, the (generalized) Demidovich conditions
verify stability of the variational system~\cite{LohmillerSlotine_ContrNLSys:1998,ManchesterSlotine:2017}; 
ii) equipping the state space with a norm, the matrix measure of the linearization verifies contraction~\cite{Sontag_CSWI:2010,DavydovJafarpourBullo_NonEuclidean:2022}.
Some applications of contraction analysis include: 
analysis and design of networked systems~\cite{DeLellisRusso:2010,RussoSontag:2012};
control design on Lie groups using control contraction metrics~\cite{ManchesterSlotine:2017,WuYiManchester_CCMLieGroup:2024}; 
Lyapunov function design for monotone systems~\cite{Coogan:2019}; 
and robustness analysis of implicit neural networks~\cite{JafarpourBullo:2021}. %

There are many existing approaches formulating contraction in the coordinate-free setting. One approach uses tools from affine differential geometry~\cite{SimpsonBullo_ContrRiemann:2014}, where contraction on Riemannian manifolds is characterized using the Levi-Civita connection. 
Another approach uses a globally defined Finsler-Lyapunov function~\cite{ForniSepulcre_DiffLyap:2013}, where contraction on Finsler manifolds is characterized using Lyapunov conditions on the variational system.
Recently, these results have been extended fully into the coordinate-free setting, where the contraction criterion was characterized using the complete lift of the system and a Finsler-Lyapunov function~\cite{WuDuan_GeoLyapCharISSFinsler:2022}. 
In doing so, a converse result showed that every contracting system admits a Finsler-Lyapunov function.
Further, the complete lift has has been used to connect contraction and local exponential stability on Riemannian manifolds~\cite{WuYiRantzer:2024}.

A manifold equipped with a transitive Lie group action is called a homogeneous space, and provides a canonical equivalence between points.
When the relevant structures (\emph{e.g.}, metrics and connections) remain invariant under this equivalence, problems can often be simplified by lifting to the Lie group, providing a separate suite of algebraic tools for global analysis which can improve theoretical~\cite{Nomizu_InvAffConn:1954,GallierQuantance_DiffGeoLGComp:2020} and computational~\cite{Munthe-Kaas_RKMK:1999} capabilities.
In control theory, homogeneous spaces have been studied in the framework of differential positivity~\cite{MostajeranSepulchre_MonHomSpace:2018}, which is the generalization of monotonicity to manifolds~\cite{ForniSepulcre_DiffPosSys:2015}. 
In these settings, a cone field respecting the group action on the manifold improves tractability in verifying stable attractors for differentially positive systems~\cite{MostajeranSepulcre_InvDiffPos:2018}.
Recently, an almost global tracking controller has been proposed for fully actuated mechanical systems on homogeneous spaces~\cite{WeldeKumar:2024}.

\emph{Contributions} \quad
The main contribution of this work is a simplified global condition for contraction with respect to an invariant metric on reductive homogeneous spaces, which we use to prove a necessary condition for the existence of a contracting system.
In Lemma~\ref{lem:well_def_PX}, identifying the dynamics using a left-invariant frame for a horizontal bundle of a Lie group, we linearize the dynamics into a real $m\times m$ matrix at each point along the actual manifold itself, allowing us to compute the Levi-Civita connection with a standard matrix multiplication.
In Lemma~\ref{lem:propPX}, we show how the abstract contraction condition from~\cite{SimpsonBullo_ContrRiemann:2014} can be transformed into a familiar matrix measure computation.
In Theorem~\ref{thm:inv_contr}, we use these facts to provide a global characterization of a contracting system with respect to an invariant metric.
Next, we pose the following question: under what necessary conditions does there exist a contracting system on a given subset $U$?
Clearly, contraction regions should be contractible in the topological sense, but this does not account for the underlying geometry of the space.
Using our contraction condition, we provide another necessary condition which captures the underlying structure when the metric is invariant under a group action.
Specifically, Theorem~\ref{thm:not_global_contr} shows how invariant contraction regions cannot contain the image of circular one-parameter subgroups acting on the manifold. Corollary~\ref{cor:natred_closed_geo} elaborates that contraction regions cannot contain any closed geodesics if the underlying space is naturally reductive (\emph{e.g.}, symmetric). 
As an example, we show how there are no systems on the sphere (with the standard metric) which contract regions containing any great circle.
Finally, we use invariant contraction theory to simplify reachable set computation for an attitude control system on $SO(3)$.

\section{BACKGROUND AND NOTATION}

Let $M$ denote a ($C^\infty$ smooth) manifold, let $T_pM$ denote the tangent space at $p\in M$, and $TM = \bigsqcup_{p\in M} T_pM$ denote the tangent bundle.
Given a smooth map $f:M\to N$ between manifolds $M$ and $N$, 
let $Tf : TM \to TN$ denote the tangent (differential) map, and $T_pf:T_pM \to T_{f(p)}N$ denote its restriction to $T_pM$.
A vector field $X$ on $M$ (denoted $X\in\Gamma^\infty(TM)$) is a smooth section of the tangent bundle.
Given $X\in\Gamma^\infty(TM)$, let $\Phi_X:J\times M\to M$ be the flow, where $t\mapsto\Phi_X^t(p)$ is the maximal integral curve of $X$ containing $t=0$.
For a smooth map $f:M\to\R$ and $X\in\Gamma^\infty(TM)$, let $\calL_Xf$ denote the Lie derivative.
We use $\dangle{\cdot,\cdot}$ to denote a Riemannian metric, where $\dangle{\cdot,\cdot}_p$ is the inner product in $T_pM$.
We use $\nabla$ to denote the Levi-Civita connection unless otherwise specified.

A Lie group $G$ is a manifold with compatible group structure.
Let $e\in G$ denote the identity.
Let $\ell_g:G\to G$, $h\mapsto \ell_g(h) = gh$ for every $g\in G$ denote the left translation map.
Let $\frakg = T_eG$ denote the Lie algebra, and for $X\in\frakg$, let $X^L\in\Gamma^\infty(TG)$ denote the corresponding left-invariant vector field on $G$, \ie, $g\mapsto X^L(g) = T_e\ell_g(X)$.
The space $\frakg$ is endowed with the Lie bracket $[\cdot,\cdot]_\frakg$ from the usual Lie bracket of vector fields, as $[X,Y]_\frakg^L = [X^L,Y^L]$.
Let $\exp:\frakg\to G$ denote the Lie exponential map, \ie, $\exp(X) = \Phi^1_{X^L}(e)$.
Given $g\in G$, let $\conj_g :G\to G$, $g'\mapsto \conj_g(g') = gg'g^{-1}$ denote conjugation by $g$, and let $\Ad_g:\frakg\to\frakg$, $X\mapsto \Ad_g(X)=T_e\conj_g(X)$ denote the adjoint representation.
We use Einstein's summation convention, where repeated indices in a term are summed over, \emph{e.g.}, $a^ix_i = \sum_{i=1}^n a^ix_i$. Let $\delta_{ij}$ denote the Kronecker delta, where $\delta_{ij} = 0$ for every $i\neq j$ and $\delta_{ij} = 1$ when $i=j$.

\section{REDUCTIVE HOMOGENEOUS SPACES AND INVARIANT RIEMANNIAN METRICS}

In this section, we recall several key facts about reductive homogeneous spaces. 
We refer to the following modern sources: \cite[Sec. 23]{GallierQuantance_DiffGeoLGComp:2020} for an in depth discussion on reductive homogeneous spaces, and \cite{Schlarb_CovariantHomogeneous:2024} regarding invariant metrics and affine connections on homogeneous spaces through the lens of horizontal lifts.
We also refer to the original work~\cite{Nomizu_InvAffConn:1954} characterizing invariant affine connections.

\subsection{Homogeneous spaces}

Let $G$ be a connected Lie group and $M$ be a manifold. A transitive (left) action of $G$ on $M$ is a smooth map $\lambda : G\times M \to M$ satisfying the following axioms:
\begin{enumerate}[i)]
    \item $\lambda (e,p) = p$ for every $p\in M$;
    \item $\lambda (g,\lambda(g',p)) = \lambda(gg',p)$ for every $g,g'\in G$, $p\in M$;
    \item For every $p,q\in M$, there exists $g\in G$ such that $\lambda (g,p) = q$ (transitivity). 
\end{enumerate}
$(M,\lambda)$ is called a homogeneous space. 
Fix a point $o\in M$, and let $G_o = \{g\in G : \lambda(g,o) = o\}$ denote its stabilizer, which forms a closed subgroup of $G$.
One can equivalently construct the homogeneous space $M$ by taking the quotient $G/G_o$, with the following equivalence between (left) cosets and points in $M$, $gG_o \iff \lambda (g,o)$.
In fact, given any closed subgroup $H\subseteq G$, the quotient $G/H$ is a homogeneous space under the natural action $\lambda(g',gH) = (g'g)H$.
Thus, we use the notation $M = G/H$ to denote a homogeneous space with action $\lambda$, and consider points $p\in M$ the same as a point $gH$ given a representative $g$ in the corresponding coset.
For $g\in G$, let $\lambda_g : M\to M$, $\lambda_g(p) = \lambda(g,p)$ denote the associated diffeomorphism. Let $\pi:G\to G/H$, $g\mapsto\pi(g)=gH$ denote the canonical projection.

\subsection{Horizontal lifts}

An important class of homogeneous spaces arises when the Lie algebra $\frakg$ can be decomposed into the direct sum of two subspaces with some additional structure.
\begin{definition} [Reductive homogeneous space {\cite[Def. 23.8]{GallierQuantance_DiffGeoLGComp:2020}}]
    Let $G$ be a Lie group with Lie algebra $\frakg$ and $H\subseteq G$ denote a closed subgroup with Lie subalgebra $\frakh\subseteq\frakg$. The homogeneous space $G/H$ is called \defi{reductive} if there exists a subspace $\frakm\subseteq\frakg$ such that $\frakg = \frakh \oplus \frakm$, and $\Ad_h(\frakm) \subseteq \frakm$ for every $h\in H$ ($\frakm$ is $\Ad_H$-invariant).
\end{definition}

We follow the notation from~\cite{Schlarb_CovariantHomogeneous:2024}. For a reductive homogeneous space $M=G/H$ with reductive decomposition $\frakg=\frakh\oplus\frakm$, define the \defi{horizontal bundle} $\Hor(G) = \bigsqcup_{g\in G} \Hor(G)_g$, where the fiber $\Hor(G)_g = T_e\ell_g(\frakm)$.
Given vector field $X\in\Gamma^\infty(TM)$, define the \defi{horizontal lift} $\olX\in\Gamma^\infty(\Hor(G))$ as the unique vector field where for $g\in G$,
\begin{align*}
    \olX(g) = (T_g\pi\big|_{\operatorname{Hor(G)}_g})^{-1} X (\pi(g)),
\end{align*}
where $T_g\pi\big|_{\operatorname{Hor(G)}_g}: \Hor(G)_g \to T_{gH}M$ is the invertible restriction of $T_g\pi$.
The horizontal lift takes the vector field $X$ on $G/H$ and defines a vector field $\olX$ on $G$, where $\olX(g)$ lives on the subspace $T_e\ell_g(\frakm) = \Hor(G)_g$, and the projection $T_g\pi(\olX(g))$ is equivalent for any representative $g$ in $gH$.

\subsection{Invariant structures on homogeneous spaces}

In this section, we discuss various invariant geometric structures on homogeneous spaces.
Throughout this section, let $M = G/H$ denote a reductive homogeneous space with reductive decomposition $\frakg = \frakh \oplus \frakm$.

\emph{Metrics}\quad
A Riemannian metric $\dangle{\cdot,\cdot}$ on $M$ is invariant if for every $g\in G$, $p\in M$, and $v_p,w_p\in T_pM$,
\begin{align*}
    \dangle{v_p,w_p}_p = \dangle{T_p\lambda_g(v_p), T_p\lambda_g(w_p)}_{\lambda_g(p)}.
\end{align*}
An inner product $\<\cdot,\cdot\>_\frakm:\frakm\times\frakm\to\R$ is $\Ad_H$-invariant if for every $h\in H$ and $X,Y\in\frakm$,
\begin{align*}
    \<\Ad_h(X), \Ad_h(Y)\> = \<X,Y\>,
\end{align*}
It is known~\cite[Prop. 23.22]{GallierQuantance_DiffGeoLGComp:2020} that there is a one-to-one correspondence between $\Ad_H$-invariant inner products on $\frakm$ and invariant Riemannian metrics on $M$ by requiring the map $T_e\pi|_{\frakm} : \frakm \to T_{eH}M$ to be an isometry.

Given an inner product $\<\cdot,\cdot\>_\frakm$ and a basis $\calA = \{A_1,\dots,A_m\}$ for $\frakm$, we define the inner product $\<\cdot,\cdot\>_\calA$ on $\R^m$, such that $\<v,w\>_\calA = \<v^iA_i,w^jA_j\>_\frakm$.

\emph{Connections}\quad
An affine connection $\nabla$ on $M$ is invariant if for every $g\in G$ and $X,Y\in\Gamma^\infty(TM)$,
\begin{align*}
    \nabla_XY = (\lambda_{g^{-1}})_* (\nabla_{(\lambda_g)_*X} (\lambda_g)_*Y),
\end{align*}
where $(\lambda_g)_* = T\lambda_g \circ X \circ \lambda_{g^{-1}}$ denotes the pushforward.
It was originally shown by Nomizu~\cite[Thm. 8.1]{Nomizu_InvAffConn:1954} that there is a one-to-one correspondence between bilinear maps $\alpha : \frakm\times\frakm \to \frakm$ and invariant affine connections $\nabla$ on $M$.

Following the treatment from~\cite[Thm. 4.15, Def. 4.16]{Schlarb_CovariantHomogeneous:2024}, let $\calA = \{A_1,\dots,A_m\}\subseteq\frakm$ be a basis, $X,Y\in\Gamma^\infty(TM)$, and $\olX,\olY\in\Gamma^\infty(\Hor(G))$ be their horizontal lifts. Since $\{A_1^L(g),\dots,A_m^L(g)\}$ span $\Hor(G)_g$ for every $g$, one can express the horizontal lift as the linear combinations $\olX = \olX^i A_i^L$ and $\olY = \olY^i A_i^L$, for curves $\olX^i : G\to\R$ and $\olY^i:G\to\R$ for every $i = 1,\dots,m$.
As shown in~\cite[Thm 4.15]{Schlarb_CovariantHomogeneous:2024}, the covariant derivative $\nabla^\alpha_XY$ between any two vector fields can be expressed through these horizontal lifts as
\begin{align*}
    \ol{\nabla^\alpha_XY} = (\calL_{\olX} \olY^j) A_j^L + \olX^j\olY^k \alpha_{jk}^L,
\end{align*}
where $\alpha_{jk} = \alpha^i_{jk} A_i = \alpha(A_j,A_k)\in\frakm$ for each $j,k\in\{1,\dots,m\}$, with $\alpha^i_{jk}\in\R$.

\emph{Levi-Civita}\quad
It was originally shown in~\cite[Thm. 13.1]{Nomizu_InvAffConn:1954} that the Levi-Civita connection $\nabla$ associated to an invariant metric $\dangle{\cdot,\cdot}$ induced by $\Ad_H$-invariant inner product $\<\cdot,\cdot\>_\frakm$ is an invariant connection, characterized by the bilinear map
\begin{align} \label{eq:LCalpha}
    \alpha (X,Y) = \tfrac12 [X,Y]_\frakm + U(X,Y),
\end{align}
where $U(X,Y)$ is uniquely determined by
\begin{align} \label{eq:LC_U}
    \<U(X,Y),Z\> = \tfrac12 (\<[X,Z]_\frakm,Y\> + \<X,[Y,Z]_\frakm\>).
\end{align}

In a couple of special cases, the expression for $\alpha$ for the invariant Levi-Civita connection is simplified. 
\begin{definition} [Naturally reductive spaces {\cite[Def. 23.9]{GallierQuantance_DiffGeoLGComp:2020}}]
    Let $G/H$ be a homogeneous space with invariant Riemannian metric $\dangle{\cdot,\cdot}$ induced by $\Ad_H$-invariant inner product $\<\cdot,\cdot\>_\frakm$. $G/H$ is called \defi{naturally reductive} if for every $X,Y,Z\in\frakm$,
    \begin{align*}
        \<[X,Z]_\frakm,Y\> + \<X,[Y,Z]_\frakm\> = 0.
    \end{align*}
\end{definition}
In the case of a naturally reductive homogeneous space, $U(X,Y) = 0$, so $\alpha(X,Y) = \frac12[X,Y]_\frakm$ recovers the Levi-Civita connection.
When $G/H$ is naturally reductive, the geodesics coincide with curves of the form $\pi(g\exp(tX))$, $X\in M$~\cite[Prop. 23.28]{GallierQuantance_DiffGeoLGComp:2020}.
A further subclass of naturally reductive homogeneous spaces are symmetric spaces~\cite[Def. 23.14]{GallierQuantance_DiffGeoLGComp:2020}. 
In symmetric spaces, $[\frakm,\frakm]_\frakg \subseteq\frakh$, so $\alpha = 0$ recovers the Levi-Civita connection.

\subsection{Examples of reductive Riemannian homogeneous spaces}

\begin{figure}
    \centering
    \includesvg[width=0.7\columnwidth]{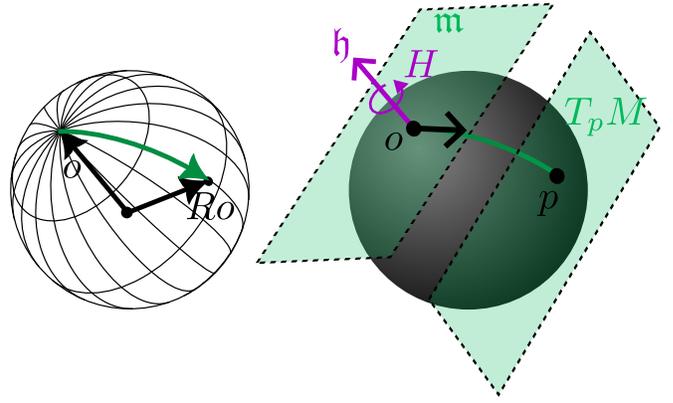}
    \caption{A visualization of the reductive homogeneous space $\bbS^2 = SO(3) / SO(2)$. \textbf{Left:} An element $R\in SO(3)$ acts on $o$, resulting in the vector $p = Ro$. \textbf{Right:} The stabilizer $H= SO(2)$ is the set of rotations $R$ which fix $o$ ($Ro = o$). 
    The element $X\in\frakm$, a $\Ad_H$-invariant subspace, generates the green curve $\pi(\exp(tX))$ and arrives to $p = \pi(\exp(X))$.}
    \label{fig:sphere_homogeneous}
    \vspace{-1em}
\end{figure}

In the sequel, we will consider the following structure.
\begin{definition}
    Let $M = G/H$ be a reductive homogenous space, with reductive decompostion $\frakg = \frakh \oplus \frakm$, $\Ad_H$-invariant inner product $\<\cdot,\cdot\>_\frakm$. 
    We call the tuple $(G,H,\<\cdot,\cdot\>_\frakm)$ a \emph{reductive Riemannian homogeneous space}.
\end{definition}

\begin{example} \label{ex:homogeneous_examples}
We briefly present several examples of reductive Riemannian homogeneous spaces.
\begin{enumerate}
    \item $(V,W^\perp,\<\cdot,\cdot\>|_{W\times W})$---a linear subspace $W\subseteq V$ of inner product space $(V,\<\cdot,\cdot\>)$, as $W = V / W^\perp$, where $W^\perp$ is the orthogonal complement of $W$.
    \item $(G,\{e\},\dangle{\cdot,\cdot})$---a Lie group $G$ with a left-invariant metric. $G$ acts on itself on the left as $\lambda(g_1,g_2) = g_1g_2$. The metric is left-invariant if $\dangle{T_e\ell_g(X),T_e\ell_g(Y)}_{g} = \dangle{X,Y}_e$. Here $H$ is trivial, so $\frakm = \frakg$ and $\dangle{\cdot,\cdot}_e$ is automatically $\Ad_H$-invariant. If the metric is bi-invariant (also invariant under right translation), the space is naturally reductive.
    \item \label{ex:n_sphere} $(SO(n+1),SO(n),\dangle{\cdot,\cdot})$---The $n$-dimensional sphere $\bbS^n$ is diffeomorphic to the quotient $SO(n+1)/SO(n)$. To see this for $\bbS^2$, consider the action of $G=SO(3)$ on vectors of $\R^3$ as rotations (matrix multiplication). Taking $o = [0\ 0\ 1]^T$, the orbit $G\cdot o$ sweeps out the entire unit sphere in $\R^3$, \emph{i.e.}, $\bbS^2$. The stabilizer $G_o$ is the set of rotations about $o$, namely of the form $\begin{bsmallmatrix} \cos(\theta) & -\sin(\theta) & 0 \\ \sin(\theta) & \cos(\theta) & 0 \\ 0 & 0 & 1 \end{bsmallmatrix}$, which is diffeomorphic to $SO(2)$. Taking the metric $\<X,Y\>_\frakg = \tfrac12\operatorname{trace}(X^TY)$, it is easy to show that the restriction $\<\cdot,\cdot\>_\frakm = \<\cdot,\cdot\>_\frakg |_{\frakm \times \frakm}$ is $\Ad_{G_o}$-invariant, giving rise to an invariant metric $\dangle{\cdot,\cdot}$ on $\bbS^2 = SO(3) / SO(2)$ (matching the standard round metric).
    This is visualized in Figure~\ref{fig:sphere_homogeneous}.
    Finally, $\bbS^n$ with this metric is a symmetric space, so $\alpha=0$.
\end{enumerate}
\end{example}

\section{INVARIANT CONTRACTION THEORY}

In~\cite{SimpsonBullo_ContrRiemann:2014}, contraction with respect to a distance defined by a Riemannian metric was formulated in a coordinate-free manner using the Levi-Civita connection.

\begin{definition}[Contracting system {\relax\cite[Def. 2.1]{SimpsonBullo_ContrRiemann:2014}}]
    \label{def:contr_sys}
    Let $M$ be a manifold, $X\in\Gamma^\infty(TM)$, and $U\subseteq M$ be a connected set. Let $\dangle{\cdot,\cdot}$ be a Riemannian metric on $M$.
    We say that $(U,X,\dangle{\cdot,\cdot},c)$ is a \emph{contracting system} on $M$ at rate $c < 0$ if for every $x\in U$ and $v\in T_xM$,
    \begin{align*}
        \dangle{\nabla_{v}X, v}_x \leq c\dangle{v,v}_x,
    \end{align*}
    where $\nabla$ denotes the Levi-Civita connection of $\dangle{\cdot,\cdot}$. 
\end{definition}

Checking this condition in practice, however, requires the use of local charts to verify the generalized Demidovich conditions~\cite[Prop. 2.4]{SimpsonBullo_ContrRiemann:2014}.
In this section, for reductive homogeneous spaces equipped with an invariant metric, we develop a global characterization of a contracting system, using matrix measures on real $m\times m$ dimensional matrices.

\subsection{Linearization in left-invariant frames}

Given a vector field $X\in\Gamma^\infty(TM)$, we first consider the linearization of the horizontal lift $\olX\in\operatorname{Hor}(G)$, with respect to a left-invariant frame from a basis of $\frakm$.

\begin{definition}[Horizontal linearization in left-invariant frame] \label{def:horleftinvlin}
    Let $(G,H,\<\cdot,\cdot\>_\frakm)$ be a reductive Riemannian homogeneous space, and let $\calA=\{A_1,\dots,A_m\}$ be a orthonormal basis for $\frakm$.
    For a vector field $\olX\in\Gamma^\infty(\Hor(G))$, characterized by $\olX = \olX^i A^L_i$ for curves $\olX^i:G\to\R$, define the map $\partial_\calA \olX : G\to\R^{m\times m}$,
    \begin{align*}
        (\partial_\calA \olX)^i_j = \calL_{A_j^L} \olX^i + \olX^k \alpha^i_{jk},
    \end{align*}
    with $\alpha^i_{jk}$ defined as $\alpha(A_j,A_k) = \alpha^i_{jk}A_i$ for the bilinear map $\alpha:\frakm\times\frakm \to \frakm$ from~\eqref{eq:LCalpha}.
\end{definition}

\begin{lemma}[Properties of $\partial_\calA \olX$] \label{lem:well_def_PX}
    Let $(G,H,\<\cdot,\cdot\>_\frakm)$ be a reductive Riemannian homogeneous space and $\calA=\{A_1,\dots,A_m\}$ be an orthonormal basis for $\frakm$.
    Let $X\in\Gamma^\infty(TM)$ and $\olX\in\Gamma^\infty(\Hor(G))$ be its horizontal lift. 
    The following statements hold:
    \begin{enumerate}[i)]
        \item \label{lem:well_def_PX:p1} For any $v\in\R^m$, and $V\in\Gamma^\infty(TM)$ defined by $V(gH) = T_g\pi(v^jA_j^L(g))$,
        \begin{align*}
            \overline{\nabla_{V}X} = (\partial_\calA \olX)^i_j v^j A_i^L = ((\partial_\calA \olX)v)^{\wedge L},
        \end{align*}
        where $\wedge : \R^n \to \frakg$ is the basis expansion.
        \item \label{lem:well_def_PX:p2} For any $g_1,g_2$ such that $g_1H = g_2H$, 
        \begin{align*}
            \partial_\calA\olX(g_1) = \partial_\calA\olX(g_2).
        \end{align*}
    \end{enumerate}
\end{lemma}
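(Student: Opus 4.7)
The plan is to prove (i) by a direct application of Schlarb's horizontal-lift formula for the covariant derivative, and (ii) by tracking the $H$-equivariance of each ingredient appearing in the definition of $\partial_\calA\olX$.

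For (i), the vector field $V$ in the statement has horizontal lift $\olV = v^j A_j^L$ with \emph{constant} components $\olV^j \equiv v^j$ in the left-invariant frame. Substituting into the formula $\overline{\nabla_V X} = (\calL_{\olV}\olX^i) A_i^L + \olV^j \olX^k \alpha_{jk}^L$ from~\cite[Thm.~4.15]{Schlarb_CovariantHomogeneous:2024} and pulling the constants out of the Lie derivative via $\calL_{\olV}\olX^i = v^j \calL_{A_j^L}\olX^i$ yields $\overline{\nabla_V X} = v^j\bigl[\calL_{A_j^L}\olX^i + \olX^k \alpha^i_{jk}\bigr] A_i^L = (\partial_\calA\olX)^i_j v^j A_i^L$, which matches the claim. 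The basis-expansion form $((\partial_\calA\olX)v)^{\wedge L}$ is then immediate.

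For (ii), fix $h \in H$ and compare $\partial_\calA\olX$ at $g$ and at $gh$. Because $\frakm$ is $\Ad_H$-invariant, the horizontal distribution $\Hor(G)$ is equivariant under the right action, so $\olX(gh) = T_g R_h\olX(g)$. Using $T_e R_h = T_e\ell_h \circ \Ad_{h^{-1}}$, one obtains the pushforward identity $T_g R_h A_l^L(g) = (\Ad_{h^{-1}})^k_l A_k^L(gh)$, from which the component law $\olX^k(gh) = C^k_l \olX^l(g)$ follows with $C := \Ad_{h^{-1}}|_\frakm$. Next, the conjugation relation $h\exp(tA_j) = \exp(t\Ad_h A_j)\,h$ and the chain rule yield $(\calL_{A_j^L}\olX^i)(gh) = C^i_r B^p_j (\calL_{A_p^L}\olX^r)(g)$, where $B := \Ad_h|_\frakm = C^{-1}$. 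Finally, invariance of the Levi-Civita connection forces $\alpha$ to be $\Ad_H$-equivariant, $B^s_r \alpha^r_{jk} = B^p_j B^q_k \alpha^s_{pq}$; contracting with $C$ and using $BC = I$ produces the compensating identity $C^k_l \alpha^i_{jk} = C^i_r B^p_j \alpha^r_{pl}$. Assembling these three ingredients yields the coset-invariance of $\partial_\calA\olX$.

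The hard part will be the careful index bookkeeping needed to show that the $C$- and $B$-factors introduced by translating each piece of the definition (the components $\olX^k$, the Lie-derivative term $\calL_{A_j^L}\olX^i$, and the algebraic term $\olX^k\alpha^i_{jk}$) recombine via $BC = I$ into the desired identification. The conceptual crux is that the $\Ad_H$-equivariance of $\alpha$, which is a direct consequence of Nomizu's characterization of invariant affine connections, supplies exactly the compensation required to absorb the transformation of the Lie-derivative term, so that the combined expression descends to a well-defined object on the fiber $gH$.
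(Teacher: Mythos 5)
Your argument for part~\eqref{lem:well_def_PX:p1} is correct and is essentially the paper's own: both substitute the constant-coefficient lift $\olV = v^jA_j^L$ into the horizontal-lift formula for $\nabla^\alpha$ and pull the constants $v^j$ out of the Lie derivative. Nothing further is needed there.

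For part~\eqref{lem:well_def_PX:p2} you take a genuinely different route from the paper (which pushes $\ol{\nabla_VX}$ down to $T_{gH}M$ via $T_g\pi$ and cancels), but your assembly step does not close. Your three transformation laws are individually correct: $\olX^k(gh) = C^k_l\olX^l(g)$, $(\calL_{A_j^L}\olX^i)(gh) = C^i_rB^p_j(\calL_{A_p^L}\olX^r)(g)$, and the $\Ad_H$-equivariance of $\alpha$, which gives $C^k_l\alpha^i_{jk} = C^i_rB^p_j\alpha^r_{pl}$. But combining them yields
\begin{align*}
(\partial_\calA\olX)^i_j(gh) \;=\; C^i_r\,(\partial_\calA\olX)^r_p(g)\,B^p_j,
\qquad\text{i.e.}\qquad
\partial_\calA\olX(gh) \;=\; C\,\partial_\calA\olX(g)\,C^{-1},
\end{align*}
with $C = \Ad_{h^{-1}}|_\frakm$. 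The relation $BC=I$ places $B$ and $C$ on \emph{opposite} sides of the matrix, so they conjugate rather than cancel; literal equality with $\partial_\calA\olX(g)$ would require $C$ to commute with $\partial_\calA\olX(g)$, which fails for generic $X$ whenever $\Ad_H$ acts nontrivially on $\frakm$ (e.g., on $\bbS^2=SO(3)/SO(2)$, where $\Ad_{\exp(\theta A_Z)}|_\frakm$ is a nontrivial rotation). So there is a genuine gap: your (correct) intermediate computations establish covariance up to orthogonal conjugation, not the claimed matrix equality. This conjugation law is still sufficient for everything used downstream, since $C$ is orthogonal with respect to $\<\cdot,\cdot\>_\calA$ and hence the quadratic form $\max_{|v|_\calA=1}\<\partial_\calA\olX(g)v,v\>_\calA$ appearing in Lemma~\ref{lem:propPX} and in the matrix-measure definition is unchanged under $g\mapsto gh$; but it does not prove part~\eqref{lem:well_def_PX:p2} as stated. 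The point of divergence from the paper's proof is the assertion $T_{g_1}\pi(A_i^L(g_1)) = T_{g_2}\pi(A_i^L(g_2))$, which is precisely the statement that $C$ acts as the identity on $\frakm$ modulo $\frakh$; your transformation law $\olX^k(gh)=C^k_l\olX^l(g)$ contradicts that assertion whenever $\Ad_h|_\frakm\neq\id$, so you should reconcile the two before relying on either argument.
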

\begin{proof}
Let $v\in\R^m$, $g\in G$. Let $\olX^jA_j^L = \olX$ for $\olX^j:G\to\R$, $j=1,\dots,m$. 
Regarding~\eqref{lem:well_def_PX:p1}, since $\olV = v^jA_j^L$,
\begin{align*}
    \overline{\nabla_{V} X} &= (\calL_{v^jA_j^L} X^i) A_i^L + v^jX^k \alpha_{jk}^L\\
    &= v^j(\calL_{A_j^L} \olX^i) A_i^L + v^jX^k \alpha_{jk}^i A_i^L \\
    &= (\calL_{A_j^L} \olX^i + \olX^k \alpha_{jk}^i) v^j A_i^L = (\partial_\calA \olX)_j^i v^j A_i^L.
\end{align*}

Regarding~\eqref{lem:well_def_PX:p2}, let $g_1H = g_2H$. Since $T_{g_1}\pi(\ol{\nabla_{v^jA_j^L} X}(g_1)) = T_{g_2}\pi(\ol{\nabla_{v^jA_j^L} X}(g_2))$,
\begin{align*}
    T_{g_1}\pi ((\partial_\calA \olX(g_1))_j^i v^j A_i^L(g_1)) = T_{g_2}\pi((\partial_\calA \olX(g_2))_j^i v^j A_i^L(g_2)),
\end{align*}
which implies by linearity of $T_g\pi$ that
\begin{align*}
    (\partial_\calA \olX(g_1))_j^i v^j T_{g_1}\pi (A_i^L(g_1)) = (\partial_\calA \olX(g_2))_j^i v^j T_{g_2}\pi(A_i^L(g_2)).
\end{align*}
Finally, since $T_{g_1}\pi (A_i^L(g_1)) = T_{g_2}\pi(A_i^L(g_2))$ by definition of the horizontal lift, $(\partial_\calA \olX(g_1))_j^i v^j = (\partial_\calA \olX(g_2))_j^i v^j$. Since $v$ was arbitrary, $\partial_\calA \olX(g_1) = \partial_\calA \olX(g_2)$.
\end{proof}

Lemma~\ref{lem:well_def_PX} highlights two key important facts about Definition~\ref{def:horleftinvlin}: \eqref{lem:well_def_PX:p1} shows how the invariant affine connection can be computed using a real matrix multiplication; \eqref{lem:well_def_PX:p2} shows that the matrix coincides on every coset in the space $G/H$.

\begin{definition}[Linearization on homogeneous space] \label{def:linhomspace}
    Let $(G,H,\<\cdot,\cdot\>_\frakm)$ be a reductive Riemannian homogeneous space, and let $\calA$ be an orthonormal basis for $\frakm$.
    For a vector field $X\in \Gamma^\infty(T(G/H))$ with horizontal lift $\olX\in\Gamma^\infty(\Hor(G))$, define the map $\partial_\calA X : G/H \to \R^{m\times m}$, 
    \begin{align*}
        \partial_\calA X (gH) = \partial_\calA \olX(g).
    \end{align*}
    By Part~\eqref{lem:well_def_PX:p2} of Lemma~\ref{lem:well_def_PX}, $\partial_\calA X$ is well defined since $\partial_\calA \olX$ coincides on every coset.
\end{definition}

\begin{lemma} \label{lem:propPX}
    Let $(G,H,\<\cdot,\cdot\>_\frakm)$ be a reductive Riemannian homogeneous space, and let $\calA=\{A_1,\dots,A_m\}$ be an orthonormal basis for $\frakm$.
    Let $X\in\Gamma^\infty(TM)$ and $\olX \in\Gamma^\infty(\Hor(G))$ be its horizontal lift.
    Let $\dangle{\cdot,\cdot}$ be the invariant Riemannian metric on $M = G/H$ induced by $\<\cdot,\cdot\>_\frakm$, and let $\nabla$ denote the corresponding Levi-Civita connection.
    For any $g\in G$, $v\in \R^m$, and $V\in\Gamma^\infty(TM)$ such that $V(gH) = T_g\pi(v^jA_j^L(g))$, %
    \begin{align*}
        \dangle{\nabla_{V}X, V}_{gH} &= \<\partial_\calA X(g)^i_j v^j A_i, v^iA_i\>_\frakm \\
        &= \<\partial_\calA X(g) v, v\>_\calA.
    \end{align*}
\end{lemma}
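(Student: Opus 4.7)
The strategy is to transport the computation from the tangent space $T_{gH}M$ back to the vector space $\frakm$ using the horizontal lift, apply Lemma~\ref{lem:well_def_PX}\eqref{lem:well_def_PX:p1} to evaluate the covariant derivative, and then exploit the orthonormality of the translated frame to rewrite the result as the inner product on $\R^m$ induced by $\calA$.

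First, I would compute the value of $\overline{\nabla_V X}$ at $g$. Since the horizontal lift of $V$ at $g$ equals $v^jA_j^L(g)$ by hypothesis, Lemma~\ref{lem:well_def_PX}\eqref{lem:well_def_PX:p1} gives
\begin{align*}
    \overline{\nabla_V X}(g) = (\partial_\calA X(g))^i_j v^j A_i^L(g).
\end{align*}
Pushing down by $T_g\pi$ and using that $V(gH) = T_g\pi(v^kA_k^L(g))$, the quantity of interest becomes
\begin{align*}
    \dangle{\nabla_V X, V}_{gH} = (\partial_\calA X(g))^i_j v^j v^k \dangle{T_g\pi(A_i^L(g)), T_g\pi(A_k^L(g))}_{gH}.
\end{align*}

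The next step is to show the left-invariant frame $\{T_g\pi(A_i^L(g))\}_{i=1}^m$ is orthonormal in $T_{gH}M$. Using $\pi\circ\ell_g = \lambda_g \circ \pi$, we obtain
\begin{align*}
    T_g\pi(A_i^L(g)) = T_g\pi(T_e\ell_g(A_i)) = T_{eH}\lambda_g (T_e\pi(A_i)).
\end{align*}
By the correspondence between $\Ad_H$-invariant inner products on $\frakm$ and invariant metrics on $M$, the map $T_e\pi|_\frakm$ is an isometry onto $T_{eH}M$; combined with the invariance of $\dangle{\cdot,\cdot}$ under $\lambda_g$, this yields
\begin{align*}
    \dangle{T_g\pi(A_i^L(g)), T_g\pi(A_k^L(g))}_{gH} = \<A_i, A_k\>_\frakm = \delta_{ik},
\end{align*}
since $\calA$ is orthonormal.

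Finally, substituting back,
\begin{align*}
    \dangle{\nabla_V X, V}_{gH} = (\partial_\calA X(g))^i_j v^j v^k \delta_{ik} = (\partial_\calA X(g))^i_j v^j v^i,
\end{align*}
which is exactly $\<(\partial_\calA X(g))^i_j v^j A_i, v^i A_i\>_\frakm$ by orthonormality of $\calA$, and equals $\<\partial_\calA X(g) v, v\>_\calA$ by the definition of $\<\cdot,\cdot\>_\calA$. The only subtle step is the orthonormality of the translated frame; every other manipulation is bilinear bookkeeping, so the proof is essentially a two-line computation once that identity is in hand.
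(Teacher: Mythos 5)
Your proposal is correct and follows essentially the same route as the paper: both apply Lemma~\ref{lem:well_def_PX}\eqref{lem:well_def_PX:p1}, then use the identity $\pi\circ\ell_g=\lambda_g\circ\pi$ together with the invariance of $\dangle{\cdot,\cdot}$ and the isometry $T_e\pi|_\frakm$ to reduce the inner product at $gH$ to $\<\cdot,\cdot\>_\frakm$. The only cosmetic difference is that the paper pulls the whole expression back to $T_{eH}M$ via $T_{gH}\lambda_{g^{-1}}$, whereas you push the frame forward from $eH$ and record its orthonormality at $gH$ --- the same computation read in the opposite direction.
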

\begin{proof}
    Let $g\in G$ and $v\in\R^m$.
    For notational brevity, let $\partial\olX := \partial_\calA \olX(g)$.
    Applying Part \eqref{lem:well_def_PX:p1} of Lemma~\ref{lem:well_def_PX},
    \begin{align*}
        \dangle{\nabla_VX,V}_{gH} &= \dangle{T_g\pi((\partial \olX)^i_j v^j A_i^L),T_g\pi(v^i A_i^L)}_{gH}.
    \end{align*}
    By invariance of the metric $\dangle{\cdot,\cdot}$,
    \begin{align*}
        \dangle{\nabla_VX,V}_{gH} &= \dangle{T_{gH}\lambda_{g^{-1}} (T_g\pi((\partial \olX)^i_j v^j A_i^L)), \\
        & \quad\quad T_{gH}\lambda_{g^{-1}}(T_g\pi(v^i A_i^L))}_{eH}.
    \end{align*}
    Using the chain rule, $T_{gH}\lambda_{g^{-1}} \circ T_g\pi = T_g (\lambda_{g^{-1}}\circ \pi)$, and $\lambda(g^{-1}, \pi(\cdot)) = \lambda(e,\pi(\ell_{g^{-1}}(\cdot))) = (\pi\circ\ell_{g^{-1}})(\cdot)$. Thus,
    \begin{align*}
        T_{gH}&\lambda_{g^{-1}} (T_g\pi((\partial \olX)^i_j v^j A_i^L)) = T_e\pi(T_{g}\ell_{g^{-1}} ((\partial \olX)^i_j v^j A_i^L)) \\
        &= (\partial \olX)^i_j v^j T_e\pi(T_{g}\ell_{g^{-1}} (A_i^L)) = (\partial \olX)^i_j v^j T_e\pi(A_i^L) \\
        &= (\partial \olX)^i_j v^j A_i,
    \end{align*}
    and similarly, $T_{gH}\lambda_{g^{-1}}(T_g\pi(v^i A_i^L)) = v^i A_i$.
\end{proof}

\begin{remark} \label{rem:mtx_measure_independent}
    Since $\dangle{\nabla_VX,V}_{gH}$ does not depend on the choice of orthonormal basis $\calA$, an immediate corollary of Lemma~\ref{lem:propPX} is that the quantity $\<\partial_\calA X(g) v, v\>_\calA$ is independent of the choice of $\calA$.
\end{remark}

\subsection{Invariant contracting systems}

Using Lemma~\ref{lem:propPX}, we define the matrix measure of the linearization as the maximum over vectors of unit length in any orthonormal basis, similar to the quadratic equality for norms induced by inner products~\cite[Sec. 2.6]{Bullo:CTDS}.

\begin{definition}[Matrix measure]
    Let $(G,H,\<\cdot,\cdot\>_\frakm)$ be a reductive Riemannian homogeneous space.
    For a vector field $X\in\Gamma^\infty(T(G/H))$, define the map $\mu(\partial X(\cdot)):G/H \to \R$,
    \begin{align*}
        \mu(\partial X(p)) = \max_{v\in\R^m, \<v,v\>_\calA = 1} \<\partial_\calA X(p)v,v\>_\calA,
    \end{align*}
    where $\calA$ is any orthonormal basis for $\frakm$. This is well defined by Remark~\ref{rem:mtx_measure_independent}.
\end{definition}

Strictly speaking, $\mu(\partial X(\cdot))$ is notation for a single map, not the composition of $\mu$ and $\partial X$, since $\partial X$ is only defined with respect to a basis. We adopt this notation to connect to the traditional concept of a matrix measure. Once a basis $\calA$ is chosen for $\frakm$, this can be regarded as the composition, since $\partial_\calA X (gH)$ is a real $m\times m$ matrix and $\mu$ is the usual matrix measure.

\begin{definition}[Invariant contracting system] \label{def:inv_contr_sys}
    Let $M = G/H$ be a reductive homogeneous space with reductive decomposition $\frakg = \frakh \oplus \frakm$. 
    Let $X\in\Gamma^\infty(TM)$, $U\subseteq M$ be connected, and $\<\cdot,\cdot\>_\frakm$ be an $\Ad_H$-invariant inner product on $\frakm$. We say that $(U,X,\<\cdot,\cdot\>_\frakm,c)$ is an invariant contracting system on $M$ at rate $c < 0$ if for every $p\in U$, 
    \begin{align*}
        \mu(\partial X(p)) \leq c.
    \end{align*}
\end{definition}

We now present the first Theorem of this work, which connects this condition to the original notion of a contracting system from Definition~\ref{def:contr_sys}.

\begin{theorem} \label{thm:inv_contr}
    Let $M = G/H$ be a reductive homogeneous space with reductive decomposition $\frakg = \frakh \oplus \frakm$. Let $X\in\Gamma^\infty(TM)$, $U\subseteq M$ be connected, $c<0$, $\<\cdot,\cdot\>_\frakm$ be an $\Ad_H$-invariant inner product on $\frakm$, and $\dangle{\cdot,\cdot}$ be the associated invariant metric. The following are equivalent:
    \begin{enumerate}[i)]
        \item $(U,X,\dangle{\cdot,\cdot},c)$ is a contracting system on $M$;
        \item $(U,X,\<\cdot,\cdot\>_\frakm,c)$ is an invariant contracting system on $M$.
    \end{enumerate}
\end{theorem}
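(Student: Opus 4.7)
The plan is to show that the two contraction conditions are both equivalent to the pointwise inequality
\begin{align*}
\<\partial_\calA X(g) v, v\>_\calA \leq c \<v,v\>_\calA \qquad \forall v\in\R^m,
\end{align*}
for every $g\in G$ with $gH\in U$ and every orthonormal basis $\calA$ of $\frakm$. The bridge between the abstract Riemannian condition and the matrix measure condition is Lemma~\ref{lem:propPX}, together with the fact that the map $\iota_g:\R^m\to T_{gH}M$, $v\mapsto T_g\pi(v^jA_j^L(g))$, is a linear isometry.

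First, I would verify this isometry. Because the metric $\dangle{\cdot,\cdot}$ is $G$-invariant, $T_{eH}\lambda_g$ is an isometry, and by construction of the invariant metric on $M$, $T_e\pi|_\frakm:\frakm\to T_{eH}M$ is an isometry when $\frakm$ carries $\<\cdot,\cdot\>_\frakm$. Composing, $\iota_g = T_{eH}\lambda_g\circ T_e\pi|_\frakm\circ(\text{basis expansion})$ sends $v\mapsto v^jA_j\mapsto T_e\pi(v^jA_j)\mapsto T_g\pi(v^jA_j^L(g))$, so $\dangle{\iota_g(v),\iota_g(w)}_{gH}=\<v,w\>_\calA$. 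Moreover, $\iota_g$ is onto because $\Hor(G)_g$ is mapped isomorphically onto $T_{gH}M$ by $T_g\pi$, and $\{A_j^L(g)\}$ spans $\Hor(G)_g$.

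Next, I would translate each condition. Fix $g\in G$ with $p=gH\in U$. Since $\nabla_{v_p}X$ depends only on the value $v_p\in T_pM$ and not on any extension, any $v_p\in T_pM$ may be written as $v_p=\iota_g(v)$ for a unique $v\in\R^m$. Choose any extension $V\in\Gamma^\infty(TM)$ with $V(gH)=\iota_g(v)$; Lemma~\ref{lem:propPX} then gives $\dangle{\nabla_{v_p}X,v_p}_p = \<\partial_\calA X(g)v,v\>_\calA$, while the isometry gives $\dangle{v_p,v_p}_p = \<v,v\>_\calA$. Hence $(U,X,\dangle{\cdot,\cdot},c)$ being contracting is equivalent to the quadratic inequality above holding at every $g$ with $gH\in U$ and every $v\in\R^m$.

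Finally, for the matrix measure side, the inequality $\<\partial_\calA X(g)v,v\>_\calA\leq c\<v,v\>_\calA$ for all $v\in\R^m$ is equivalent, by homogeneity and taking the supremum over the unit sphere, to $\mu(\partial X(p))=\max_{\<v,v\>_\calA=1}\<\partial_\calA X(g)v,v\>_\calA\leq c$. Thus $(U,X,\<\cdot,\cdot\>_\frakm,c)$ is an invariant contracting system iff the same pointwise inequality holds. Combining the two translations completes the proof. The only subtlety worth flagging is to check carefully that $\iota_g$ is an isometry (which reduces all of $\R^m$-versus-$T_pM$ bookkeeping to a single clean identity); everything else is a direct application of Lemma~\ref{lem:propPX} plus homogeneity of the quadratic form.
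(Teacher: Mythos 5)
Your proposal is correct and follows essentially the same route as the paper: both reduce the Riemannian condition to the quadratic inequality $\<\partial_\calA X(g)v,v\>_\calA\leq c\<v,v\>_\calA$ via Lemma~\ref{lem:propPX} and then pass to the matrix measure by homogeneity. Your explicit verification that $\iota_g$ is a surjective linear isometry is a detail the paper leaves implicit (it is buried in the computation inside the proof of Lemma~\ref{lem:propPX}), but it is the same argument.
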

\begin{proof}
    Let $\calA = \{A_1,\dots,A_m\}$ be an orthonormal basis for $\frakm$.
    By Lemma~\ref{lem:propPX}, since any $V\in T_pM$ can be written as $T_g\pi(v^jA_j^L(g))$ given a representative $g\in G$ with $p = gH$,
    \begin{align*}
        &\dangle{\nabla_{V}X, V}_p \leq c\dangle{V,V}_p \quad \forall p\in U,\,V\in T_pM \\
        &\iff \<\partial_\calA X(p)v,v\>_\calA \leq c\<v,v\>_\calA \quad \forall p\in U,\,v\in\R^m \\
        &\iff \left\<\partial_\calA X(p)\tfrac{v}{|v|_\calA},\tfrac{v}{|v|_\calA}\right\>_\calA \leq c \quad \forall p\in U,\,v\in\R^m\setminus\{0\} \\
        &\iff \left\<\partial_\calA X(p)w,w\right\>_\calA \leq c \quad \forall p\in U,\,w\in\R^m,|w|_\calA=1,
    \end{align*}
    completing the proof.
\end{proof}

As a consequence of Theorem~\ref{thm:inv_contr}, all of the desirable properties of a contracting system hold for an invariant contracting system~\cite[Prop. 2.5]{SimpsonBullo_ContrRiemann:2014}. For instance, an important Corollary (the usual defining property of a contracting system) is that distances between any two trajectories initialized in the contraction region shrink exponentially at rate $c$.

\begin{corollary}[{\cite[Thm. 2.3]{SimpsonBullo_ContrRiemann:2014}}] \label{cor:contr_bound}
    Let $M = G/H$ be a reductive homogeneous space with reductive decomposition $\frakg = \frakh \oplus \frakm$. 
    If $(U,X,\<\cdot,\cdot\>_\frakm,c)$ is an invariant contracting system on $M$, $U$ is a $K$-reachable forward $X$-invariant set, and $X$ is forward complete on $U$, for every $t\geq 0$,
    \begin{align*}
        d(\Phi_X^t(x_0),\Phi_X^t(y_0)) \leq Ke^{ct} d(x_0, y_0),
    \end{align*}
    for any initial conditions $x_0,y_0\in U$, where $d$ is the Riemannian distance associated to the invariant metric $\dangle{\cdot,\cdot}$.
\end{corollary}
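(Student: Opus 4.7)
The plan is to reduce the statement directly to the existing Riemannian contraction bound in \cite[Thm.~2.3]{SimpsonBullo_ContrRiemann:2014}, by using the equivalence we have just established. No new geometric content is needed here; the role of the corollary is simply to transport the standard consequence of contraction (exponential shrinking of distances between trajectories) through the bridge provided by Theorem~\ref{thm:inv_contr}.

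First, I would invoke Theorem~\ref{thm:inv_contr} in the direction (ii) $\Rightarrow$ (i). Since $(U,X,\<\cdot,\cdot\>_\frakm,c)$ is an invariant contracting system by hypothesis, the theorem yields that $(U,X,\dangle{\cdot,\cdot},c)$ is a contracting system in the sense of Definition~\ref{def:contr_sys}, where $\dangle{\cdot,\cdot}$ is the invariant Riemannian metric on $M=G/H$ induced by $\<\cdot,\cdot\>_\frakm$. In particular, for every $p\in U$ and $v\in T_pM$,
\begin{align*}
    \dangle{\nabla_v X, v}_p \leq c \dangle{v,v}_p,
\end{align*}
with $\nabla$ the Levi-Civita connection of $\dangle{\cdot,\cdot}$.

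Next, I would apply \cite[Thm.~2.3]{SimpsonBullo_ContrRiemann:2014} verbatim. That theorem takes as hypotheses precisely a contracting system in the sense of Definition~\ref{def:contr_sys}, a forward $X$-invariant and $K$-reachable set $U$, and forward-completeness of $X$ on $U$, and outputs the bound
\begin{align*}
    d(\Phi_X^t(x_0),\Phi_X^t(y_0)) \leq K e^{ct} d(x_0,y_0)
\end{align*}
for the Riemannian distance $d$ associated to $\dangle{\cdot,\cdot}$. All of these hypotheses are assumed in the statement of the corollary, so this application is immediate and concludes the proof.

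There is no substantive obstacle: the only nontrivial step is the equivalence in Theorem~\ref{thm:inv_contr}, which has already been proved. The purpose of stating the corollary is to make explicit that the algebraic matrix-measure condition on $\partial_\calA X$, computed on the Lie group via the left-invariant frame, does inherit the familiar exponential-contraction guarantee on the underlying homogeneous space, thereby justifying our use of the term ``invariant contracting system.''
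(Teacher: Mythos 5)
Your proposal is correct and is exactly the argument the paper intends: the corollary is stated without a written proof precisely because it follows by applying Theorem~\ref{thm:inv_contr} in the direction (ii) $\Rightarrow$ (i) and then citing \cite[Thm.~2.3]{SimpsonBullo_ContrRiemann:2014} verbatim, as you do. No differences to report.
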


\section{A NECESSARY CONDITION FOR A CONTRACTING SYSTEM}

As remarked in~\cite{SimpsonBullo_ContrRiemann:2014}, a contracting system on a manifold $M$ satisfying the conditions of Corollary~\ref{cor:contr_bound} necessarily implies that $U$ is a contractible subset, since a time reparameterization of the flow yields a homotopy of $U$ to the fixed equilibrium point. However, the converse is not generally true---given an arbitrary contractible subset $U$ of a manifold, there may not exist a vector field $X$ and a Riemannian metric $\dangle{\cdot,\cdot}$ such that $(U,X,\dangle{\cdot,\cdot},c)$ is a contracting system on $M$ for some $c<0$. 
In this section, we further investigate under what circumstances a subset $U$ can admit a contracting system. 
As we will show in Theorem~\ref{thm:not_global_contr} and Corollary~\ref{cor:natred_closed_geo}, a contraction region with an invariant metric cannot contain any loops generated by one-parameter subgroups from $\frakm$.

Before we present Theorem~\ref{thm:not_global_contr}, we first examine the circle group $\bbS$. 
While $\bbS$ is not a contractible set, we use the theory developed in the previous section to prove the following fact, motivating Theorem~\ref{thm:not_global_contr}.
\emph{Fact:} There is no globally $(U = \bbS)$ invariant contracting system on $\bbS$. 
For contradiction, suppose $(\bbS,X,\<\cdot,\cdot\>_\frakm,c)$ is an invariant contracting system on $\bbS$.
Let $A\in T_e\bbS$ be a unit vector, and let $x:\bbS\to\R$ be the curve such that $X = xA^L$.
Let $\hat{x}:\R\to\R$ be the periodic curve such that $\hat{x}(t) = x(t\text{ mod } 2\pi)$.
For any $t\in\R$,
\begin{align*}
    \partial X(t \text{ mod }2\pi) &= \calL_{A^L} x (t \text{ mod }2\pi) \\
    &= \lim_{h\to 0} \frac{x((t + h) \text{ mod }2\pi) - x(t \text{ mod } 2\pi)}{h} \\
    &= \lim_{h\to 0} \frac{\hat{x}(t + h) - \hat{x}(t)}{h} = \hat{x}'(t).
\end{align*}
Thus, $\<\hat{x}'(t)v,v\> = \hat{x}'(t) \<v,v\> \leq c\<v,v\>$, so $\hat{x}'(t) \leq c < 0$, for every $t\in\R$. But $\hat{x}$ is periodic, so this contradicts Rolle's theorem.
Figure~\ref{fig:failed_S} illustrates two failed attempts to construct an invariant contracting system on $\bbS$.

\begin{figure}
    \centering
    \includesvg[width=0.7\linewidth]{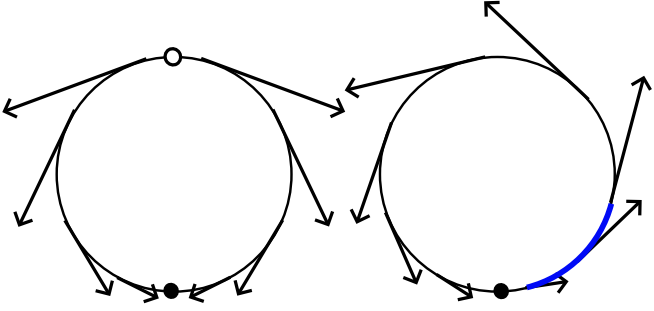}
    \caption{An illustration of two failed attempts to construct a globally contracting system on the circle equipped with an invariant metric. \textbf{Left:} The vector field is not continuous. \textbf{Right:} While the vector field is continuous, the distance indicated in blue increases.}
    \label{fig:failed_S}
    \vspace{-1em}
\end{figure}

In the following Theorem, we extend this intuition to show that the contraction region of an invariant contracting system on a reductive homogeneous space cannot contain the image of a circular one parameter subgroup of $\frakm$ through the group action $\lambda$.
In particular, we use a similar argument to show how the component of the vector field in the direction of such a subgroup cannot satisfy the contraction condition.

\begin{theorem} \label{thm:not_global_contr}
    Let $M = G/H$ be a reductive homogeneous space with reductive decomposition $\frakg = \frakh \oplus \frakm$, and let $(U,X,\<\cdot,\cdot\>_\frakm,c)$ be an invariant contracting system on $M$. For any $A\in\frakm$ generating a one parameter subgroup of $G$ isomorphic to the circle group $\bbS$, $U$ cannot contain the full set $\{\pi(g\exp(tA)) : t\in\R\}$ for any $g\in G$.
\end{theorem}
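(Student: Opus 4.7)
The plan is to extend the circle group argument given immediately before the theorem: project the vector field onto the $A$-direction along the closed loop and show that the resulting scalar function would have to be both periodic and strictly decreasing, a contradiction.

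First, I would set up the geometry. Assume for contradiction that $\{\pi(g\exp(tA)) : t\in\R\}\subseteq U$, rescale so that $\<A,A\>_\frakm = 1$, and extend $A_1 := A$ to an orthonormal basis $\calA = \{A_1,\dots,A_m\}$ of $\frakm$. Consider the closed curve $\gamma(t) := g\exp(tA_1)$ in $G$, which is periodic with some period $T > 0$ since $\exp(\cdot A)$ generates a subgroup of $G$ isomorphic to $\bbS$. Using $\gamma(t+s) = \ell_{\gamma(t)}(\exp(sA_1))$, one sees that $\gamma$ is an integral curve of $A_1^L$. Writing $\olX = \olX^j A_j^L$ for smooth $\olX^j : G\to\R$, define $\phi(t) := \olX^1(\gamma(t))$, which is automatically $T$-periodic and satisfies $\phi'(t) = (\calL_{A_1^L}\olX^1)(\gamma(t))$.

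Next, I would apply the contraction condition in the $A_1$-direction. Taking $v = e_1$, a unit vector in $\<\cdot,\cdot\>_\calA$, in the definition of the matrix measure gives the diagonal bound $(\partial_\calA X(\pi(\gamma(t))))^1_1 \leq \mu(\partial X(\pi(\gamma(t)))) \leq c$ for all $t$. To relate this bound to $\phi'$, one must absorb the Christoffel-like correction $\olX^k\alpha^1_{1k}$ appearing in Definition~\ref{def:horleftinvlin}. The crux is the algebraic identity $\alpha^1_{1k} = 0$ for every $k$: using~\eqref{eq:LCalpha} and~\eqref{eq:LC_U}, the $[A_1,A_1]_\frakm = 0$ term drops out and antisymmetry of the Lie bracket forces $\<U(A_1,A_k),A_1\>_\frakm = -\tfrac12\<[A_1,A_k]_\frakm,A_1\>_\frakm$, which exactly cancels the $\tfrac12[A_1,A_k]_\frakm$ contribution. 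Consequently $\phi'(t) = (\partial_\calA X(\pi(\gamma(t))))^1_1 \leq c < 0$, and integrating over one period yields $0 = \phi(T) - \phi(0) \leq cT < 0$, the desired contradiction.

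The main obstacle I expect is precisely this last algebraic step: showing that $\alpha^1_{1k}$ vanishes \emph{without} assuming the space is naturally reductive. In the naturally reductive case $U \equiv 0$ makes this immediate and directly matches the $\bbS$ computation given in the paper, but in general one must carefully track the symmetrization in the definition of $U$ and exploit the self-pairing with the unit vector $A_1$. This vanishing is precisely why the simple Rolle-type obstruction from the circle example survives to the general reductive homogeneous setting; without it the $\olX^k\alpha^1_{1k}$ term would be uncontrolled on an arbitrary loop, and one could not conclude monotonicity of $\phi$.
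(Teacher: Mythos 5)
Your proposal is correct and follows essentially the same route as the paper's proof: extend $A$ to an orthonormal basis with $A_1 = A$, bound the $(1,1)$ entry of $\partial_\calA X$ along the loop by $c$, establish $\alpha^1_{1k}=0$ via the same cancellation between $\tfrac12\<[A_1,A_k]_\frakm,A_1\>$ and $\<U(A_1,A_k),A_1\>$ from~\eqref{eq:LCalpha} and~\eqref{eq:LC_U}, and derive a contradiction from periodicity of $\olX^1(g\exp(tA_1))$. The only cosmetic difference is that you integrate $\phi'\leq c$ over one period where the paper invokes Rolle's theorem; these are interchangeable.
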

\begin{proof}
    For contradiction, suppose $(U,X,\<\cdot,\cdot\>_\frakm,c)$ is an invariant contracting system on $M$, let $A_1\in\frakm$ be such that $\{\exp(tA_1) : t\in\R\}\simeq\bbS$, implying the existence of $T>0$ such that $\exp(T A_1) = e$, and let $g\in G$ such that $\{\pi(g\exp(tA_1)) : t\in\R\} \subseteq U$.
    Let $\calA = \{A_1,A_2,\dots,A_m\}$ be a orthonormal basis for $\frakm$, $\olX\in\Gamma^\infty(\Hor(G))$ be the horizontal lift, and $\olX^i:G\to\R$ such that $\olX = \olX^iA^L_i$.
    Let $y = [1\ 0 \ \cdots \ 0]^T\in\R^m$.
    Consider the following function $f:\R\to\R$, where since $\partial X(\pi(g)) = \partial_\calA \olX(g)$,
    \begin{align*}
        f(t) &= \<\partial_\calA X^i_j(\pi(g\exp(tA_1)))y^jA_i,y^iA_i\> \\
        &= \<\partial_\calA \olX^i_1(g\exp(tA_1))A_i,A_1\>.
    \end{align*}
    For each $i=1,\dots,m$, let $\hat{x}^i:\R\to\R$ be the $T$-periodic curve $\hat{x}^i(t) = \olX^i(g\exp(tA_1))$.
    Unrolling Definition~\ref{def:horleftinvlin},
    \begin{align*}
        \partial_\calA \olX^i_1(g\exp(tA_1)) = (\calL_{A_1^L} \olX^i + \olX^k\alpha_{1k}^i) (g\exp(tA_1)).
    \end{align*}
    Since $\Phi_{A_1^L}^t(g) = g\exp(tA_1)$, 
    \begin{align*}
        \calL_{A_1^L} \olX^i(g\exp(tA_1)) 
        &= \lim_{h\to 0}\frac{\hat{x}^i(t+h) - \hat{x}^i(t)}{h} = (\hat{x}^i)'(t).
    \end{align*}
    Next, for any $k$, recall that $\alpha_{1k} = \alpha^i_{1k}A_i$, so $\alpha^i_{1k}$ is the $i$-th basis expansion in the orthonormal basis $\{A_1,\dots,A_m\}$. 
    Thus, $\alpha^1_{1k}$ is the orthogonal projection of $\alpha_{1k}$ onto $A_1$, \emph{i.e.},
    \begin{align*}
        &\alpha^1_{1k} = \<\alpha(A_1,A_k),A_1\> = \left\<\tfrac12 [A_1,A_k]_\frakm + U(A_1,A_k),A_1\right\> \\
        &= \tfrac12\<[A_1,A_k]_\frakm,A_1\> + \<U(A_1,A_k),A_1\> \\
        &= \tfrac12(\<[A_1,A_k]_\frakm,A_1\> + \<[A_1,A_1]_\frakm,A_k\>
        + \<A_1,[A_k,A_1]_\frakm\>),
    \end{align*}
    using the definitions of $\alpha$ and $U$ from~\eqref{eq:LCalpha} and \eqref{eq:LC_U}.
    Since $[A_1,A_1]_\frakm = 0$, $[A_k,A_1]_\frakm = -[A_1,A_k]_\frakm$, and the inner product is symmetric, $\alpha^1_{1k} = 0$.
    Finally, plugging into $f$, 
    \begin{align*}
        f(t) &= \<((\hat{x}^i)'(t) + \hat{x}^k(t) \alpha_{1k}^i)A_i, A_1\> \\
        &= ((\hat{x}^i)'(t) + \hat{x}^k(t) \alpha_{1k}^i) \delta_{i1} \\
        &= (\hat{x}^1)'(t) + \hat{x}^k(t) \alpha_{1k}^1 = (\hat{x}^1)'(t).
    \end{align*}
    By assumption, $f(t) = (\hat{x}^1)'(t) \leq c < 0$. But $\hat{x}^1(t)$ is periodic, contradicting Rolle's theorem.
\end{proof}

By~\cite[Prop. 23.28]{GallierQuantance_DiffGeoLGComp:2020}, the geodesics through $p = gH$ for a naturally reductive homogeneous space are exactly of the form $t\mapsto \pi(g\exp(tX))$ for $X\in\frakm$. Any $X\in\frakm$ generating a one parameter subgroup isomorphic to $\bbS$ therefore generates closed geodesics on $M$, leading to the following Corollary.

\begin{corollary} \label{cor:natred_closed_geo}
    Let $M$ be a naturally reductive homogeneous space (\emph{e.g.}, a Riemannian symmetric space). If $(U,X,\<\cdot,\cdot\>_\frakm,c)$ is an invariant contracting system, then $U$ contains no closed geodesics.
\end{corollary}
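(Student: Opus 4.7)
The plan is to adapt the Rolle-theorem argument used in the proof of Theorem~\ref{thm:not_global_contr}, but to run it intrinsically along the closed geodesic itself rather than along a lifted one-parameter subgroup in the ambient Lie group. The key identity to exploit is that a geodesic is \emph{parallel to itself}, so that pairing the vector field $X$ against the velocity $\dot\gamma$ yields a smooth scalar whose derivative is exactly the Levi-Civita contraction quadratic form of Definition~\ref{def:contr_sys}.

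Suppose for contradiction that $U$ contains a closed geodesic $\gamma:\R\to M$ of minimal period $T>0$. By~\cite[Prop.~23.28]{GallierQuantance_DiffGeoLGComp:2020}, natural reductivity gives $\gamma(t)=\pi(g\exp(tA))$ for some $g\in G$ and $A\in\frakm$ with $\exp(TA)\in H$. Differentiating $\gamma(t+T)=\gamma(t)$ in $t$ yields $\dot\gamma(t+T)=\dot\gamma(t)$, so the function
\begin{align*}
\phi:\R\to\R,\qquad \phi(t) := \dangle{X(\gamma(t)),\dot\gamma(t)}_{\gamma(t)}
\end{align*}
is smooth and $T$-periodic. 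I would then differentiate $\phi$ using metric compatibility of $\nabla$ together with the geodesic equation $\nabla_{\dot\gamma}\dot\gamma=0$,
\begin{align*}
\phi'(t) = \dangle{\nabla_{\dot\gamma(t)}X,\dot\gamma(t)}_{\gamma(t)} + \dangle{X(\gamma(t)),\nabla_{\dot\gamma(t)}\dot\gamma}_{\gamma(t)} = \dangle{\nabla_{\dot\gamma(t)}X,\dot\gamma(t)}_{\gamma(t)}.
\end{align*}
By Theorem~\ref{thm:inv_contr} the invariant contraction hypothesis is equivalent to the Riemannian contraction hypothesis of Definition~\ref{def:contr_sys}, so since $\gamma(t)\in U$ for every $t\in\R$,
\begin{align*}
\phi'(t) \;\leq\; c\,\dangle{\dot\gamma(t),\dot\gamma(t)}_{\gamma(t)}.
\end{align*}
Because a geodesic has constant nonzero speed, the right-hand side is a strictly negative constant independent of $t$, so $\phi$ is strictly decreasing on $\R$; this contradicts the $T$-periodicity of $\phi$ via Rolle's theorem.

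The main obstacle is really conceptual: one must pick the \emph{right} periodic scalar to track. The basis coefficients $\olX^i(g\exp(tA))$ used in the proof of Theorem~\ref{thm:not_global_contr} are generally \textbf{not} periodic here, since only $\exp(TA)\in H$ is forced (not $\exp(TA)=e$), and a direct calculation shows the horizontal lift twists by $\Ad_{\exp(TA)^{-1}}$ after a full period. Consequently the lifted argument from Theorem~\ref{thm:not_global_contr} does not literally port over. Passing to the intrinsic scalar $\phi=\dangle{X,\dot\gamma}$ sidesteps this twist, because $\gamma(t)$ and $\dot\gamma(t)$ themselves genuinely return to their initial values at $t=T$, and metric compatibility together with parallelism of $\dot\gamma$ is exactly what converts $\phi'$ into the contraction quadratic form evaluated on $\dot\gamma$.
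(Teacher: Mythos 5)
Your proof is correct, but it is not the route the paper takes. The paper proves the corollary in one line from Theorem~\ref{thm:not_global_contr}: by \cite[Prop.~23.28]{GallierQuantance_DiffGeoLGComp:2020}, geodesics of a naturally reductive space are the curves $t\mapsto\pi(g\exp(tA))$ with $A\in\frakm$, so a closed geodesic is (claimed to be) the image of a circular one-parameter subgroup acting on $M$, which Theorem~\ref{thm:not_global_contr} forbids from lying in $U$. Your argument instead works intrinsically: you pair $X$ against $\dot\gamma$, use metric compatibility and $\nabla_{\dot\gamma}\dot\gamma=0$ to get $\phi'=\dangle{\nabla_{\dot\gamma}X,\dot\gamma}\leq c\,\dangle{\dot\gamma,\dot\gamma}<0$, and contradict periodicity of $\phi$. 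This buys two things. First, it is strictly more general: it needs neither the homogeneous structure nor natural reductivity, only that $(U,X,\dangle{\cdot,\cdot},c)$ is a contracting system in the sense of Definition~\ref{def:contr_sys} (supplied here by Theorem~\ref{thm:inv_contr}), so the same statement holds on an arbitrary Riemannian manifold. Second, your observation about the gap in the lifted argument is well taken: a geodesic $\pi(g\exp(tA))$ closes whenever $\exp(TA)\in H$, not only when $\exp(TA)=e$, and in the former case the coefficients $\olX^i(g\exp(tA))$ twist by $\Ad_{\exp(TA)^{-1}}$ rather than being $T$-periodic, so Theorem~\ref{thm:not_global_contr} does not literally apply; the paper's derivation covers only those closed geodesics generated by genuinely circular one-parameter subgroups, whereas your intrinsic scalar $\phi$ handles every closed geodesic. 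The only cosmetic quibble is that the final contradiction is really the mean value theorem (or $0=\phi(T)-\phi(0)=\int_0^T\phi'\leq cT\dangle{\dot\gamma,\dot\gamma}<0$) rather than Rolle's theorem proper, but the paper uses the same shorthand.
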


\begin{figure}
    \centering
    \includesvg[width=0.4\columnwidth]{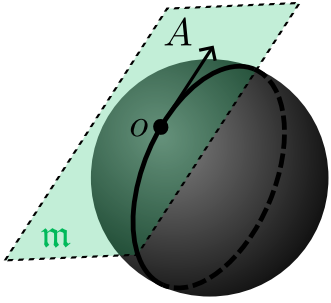}
    \caption{An illustration of a closed geodesic (great circle) passing through $o=eH$ on $\bbS^2 = SO(3)/SO(2)$ which is generated by an element $A\in\frakm$ as $\pi(\exp(tA))$. By Corollary~\ref{cor:natred_closed_geo}, a contraction region cannot contain any closed geodesic of this form; thus, no closed hemisphere is contained in the contraction region of any contracting system with the round metric.}
    \label{fig:global_S2}
    \vspace{-1em}
\end{figure}

While contractibility is a necessary topological condition for $U$ to satisfy, it does not account for the geometry of the space when equipped with a metric.
Theorem~\ref{thm:not_global_contr} and Corollary~\ref{cor:natred_closed_geo} partially bridge this gap, providing another necessary condition that accounts for the geometric properties of the space when equipped with an invariant metric.

\section{CASE STUDIES}

\begin{example} [Contraction on the sphere]
Consider a closed-loop single integrator on a sphere ($u\in\Gamma^\infty(T\bbS^2)$),
\begin{gather*}
    \dot{x} = u(x) \overset{\pi}{\iff} 
\begin{aligned}
    \dot{R} &= u^X(R) A_X^L(R) + u^Y(R) A_Y^L(R), \\ 
    &= T_e\ell_R(u^X(R)A_X + u^Y(R)A_Y),
\end{aligned} 
\end{gather*}
where we have identified $\bbS^2 = SO(3)/SO(2)$, with the round metric as Example~\ref{ex:homogeneous_examples}~\eqref{ex:n_sphere}, 
$\calA = \{A_X,A_Y,A_Z\}$ as the following basis for $\so(3)$,
\begin{align*}
    A_X = \begin{bsmallmatrix}
        0 & 0 & 0 \\
        0 & 0 & -1 \\
        0 & 1 & 0 
    \end{bsmallmatrix}, \
    A_Y = \begin{bsmallmatrix}
        0 & 0 & 1 \\
        0 & 0 & 0 \\
        -1 & 0 & 0 
    \end{bsmallmatrix}, \
    A_Z = \begin{bsmallmatrix}
        0 & -1 & 0 \\
        1 & 0 & 0 \\
        0 & 0 & 0 
    \end{bsmallmatrix},
\end{align*}
$\frakm = \operatorname{span}(A_X,A_Y)$, $\frakh = \operatorname{span}(A_Z)$, and $u^X,u^Y:SO(3)\to\R$.
The inner product $\<\cdot,\cdot\>_\frakm$ in the basis $\calA$ is the standard inner product $\<a,b\>_\calA = a^T b$.
For $i,j\in\{X,Y\}$, $4$ Lie derivative computations builds the $2\times 2$ matrix $(\partial_\calA \olu(R))_j^i = \calL_{A_j^L} u^i(R)$, since $\alpha = 0$ as $\bbS^2$ is a symmetric space. 
Thus, $\mu(\partial u(x)) = \mu_2(\partial_\calA \olu(R)) = \lambda_\text{max} (\tfrac12(\partial_\calA \olu(R) + \partial_\calA \olu(R)^T))$, for $x = \pi(R)$, and we can verify contraction for the round metric using a simple matrix measure computation.

Next, suppose $U\subseteq\bbS^2$ is a contraction region. Then Corollary~\ref{cor:natred_closed_geo} implies that no great circle (closed geodesic of the round metric) can be contained inside $U$. 
Thus, no closed hemisphere is contained in $U$. 
This is illustrated in Figure~\ref{fig:global_S2}.
\end{example}

\begin{example}[Reachable sets] \label{ex:SO3_reach}
One motivation for this work was to simplify contraction-based reachable sets for Lie groups using the distance estimate from Corollary~\ref{cor:contr_bound}. 
Consider the following control system on $SO(3)$ from~\cite{HarapanahalliCoogan_LieReach:2024},
\begin{align*} %
    \dot{R} &= X_u(R) = u_1A_X^L(R) + u_2A_Y^L(R) + u_3A_Z^L(R) \\
    &= u^1RA_X + u^2RA_Y + u^3RA_Z,
\end{align*}
with the basis $\calA = \{A_X,A_Y,A_Z\}$ for $\so(3)$ from the previous example
and the control input $u \in \R^3$, which we assume is fixed to a time-varying state independent mapping $\bfu:[0,\infty)\to\R^3$.
Consider the inner product on $\frakg$ where for every $\Theta,\Omega\in\frakg$, $\<\Theta,\Omega\>_\frakg = \frac12\operatorname{trace}(\Theta^T\Omega)$. This inner product is $\Ad_G$-invariant, so its corresponding invariant Riemannian metric is bi-invariant, and makes $SO(3)$ naturally reductive.
For any $R\in SO(3)$, we can compute $(\partial_\calA X_u)^i_j = \calL_{A_j^L} u^i = 0$,
for $i,j\in\{X,Y,Z\}$, thus, $\mu(\partial X(R)) = 0$ for any $R\in SO(3)$, so the system is nonexpansive ($c=0$)\footnote{Here, we are imprecise with how we handle the time-varying nature of the system. After fixing a map $\bfu:\R\to\R^3$, one can write the closed-loop system by adding time as a state and considering a vector field on $SO(3)\times \R$, with $\dot{t} = 1$. This system is indeed nonexpansive.}. While the system is not an invariant contracting system as in Definition~\ref{def:contr_sys} as $c$ is nonnegative, this still implies that for any two trajectories $R(t)$ and $R'(t)$,
\begin{align*}
    d(R(t),R'(t)) \leq d(R(0),R'(0)).
\end{align*}
Thus, to compute the reachable set from a metric ball initial set of the form $\calB_{R'_0}(r) = \{R\in SO(3) : d(R,R'_0) \leq r\}$, we simply need to simulate the trajectory from the center $R'(t)$, and the reachable set at time $t$ is guaranteed to be a subset of $\calB_{R'(t)}(r)$. This procedure is visualized in Figure~\ref{fig:SO3fig}, and provides far more accurate reachable set estimates when compared to~\cite{HarapanahalliCoogan_LieReach:2024}.

\emph{Discussion}\quad 
After choosing an orthonormal basis for $\frakg$, the analysis essentially reduces to analyzing the trivial system $\dot{x} = u$ in Euclidean space. Since we consider a fixed, open-loop control input, every two trajectories always remain the same distance apart. 
In the same way, on $SO(3)$, just one base trajectory needs to be simulated, and the rest are recoverable by left-translating from this trajectory.

\end{example}

\begin{figure}
    \centering
    \includegraphics[width=0.32\columnwidth]{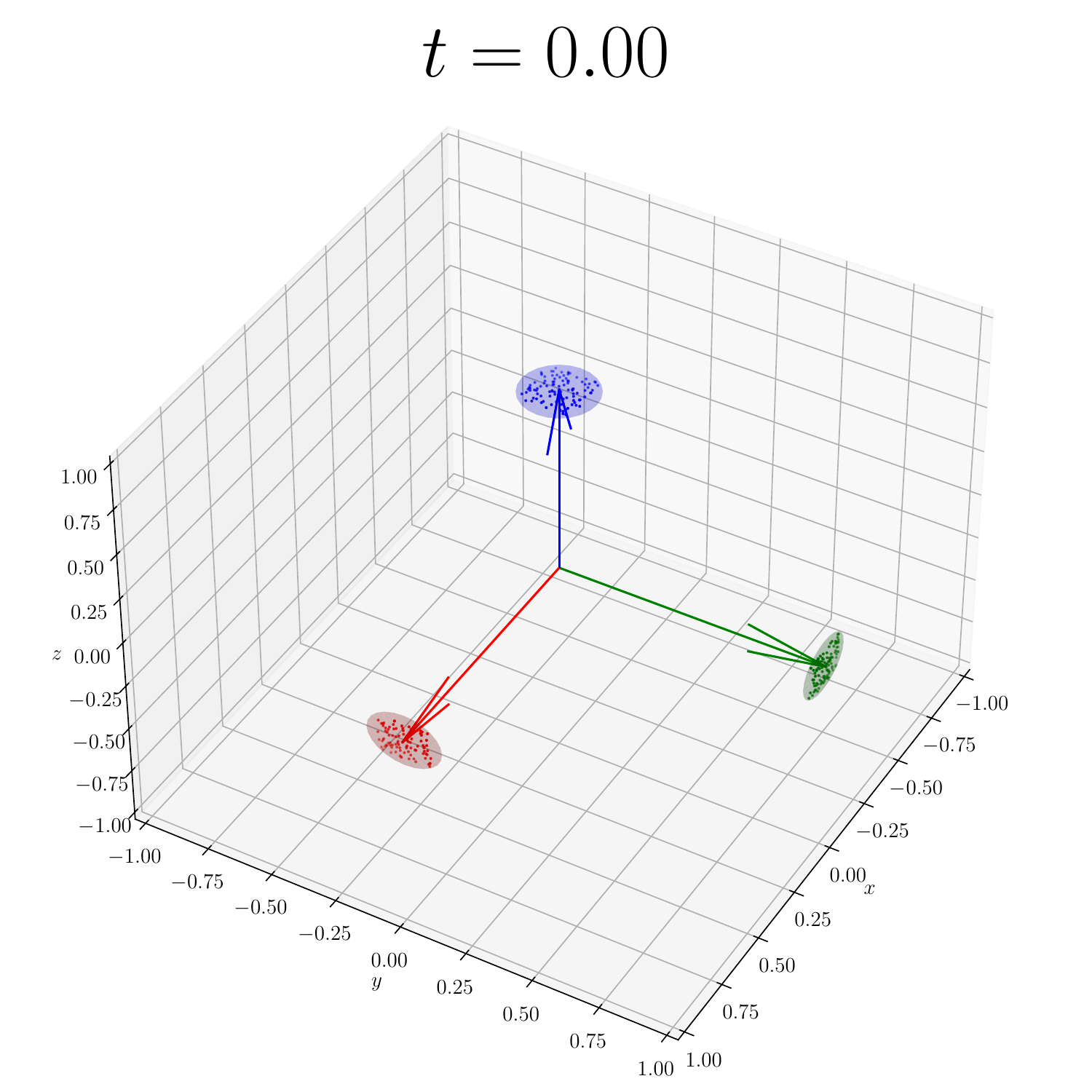}
    \includegraphics[width=0.32\columnwidth]{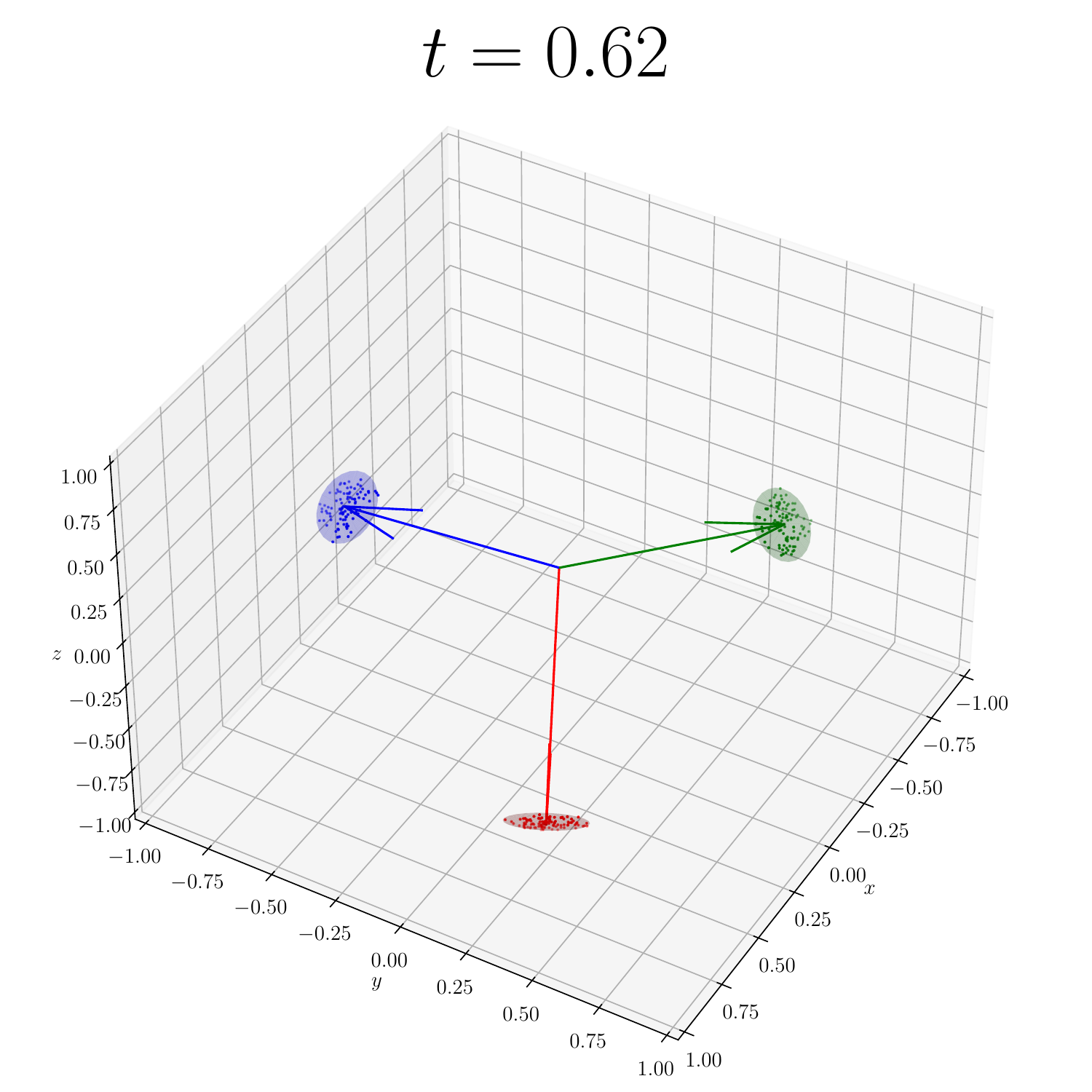}
    \includegraphics[width=0.32\columnwidth]{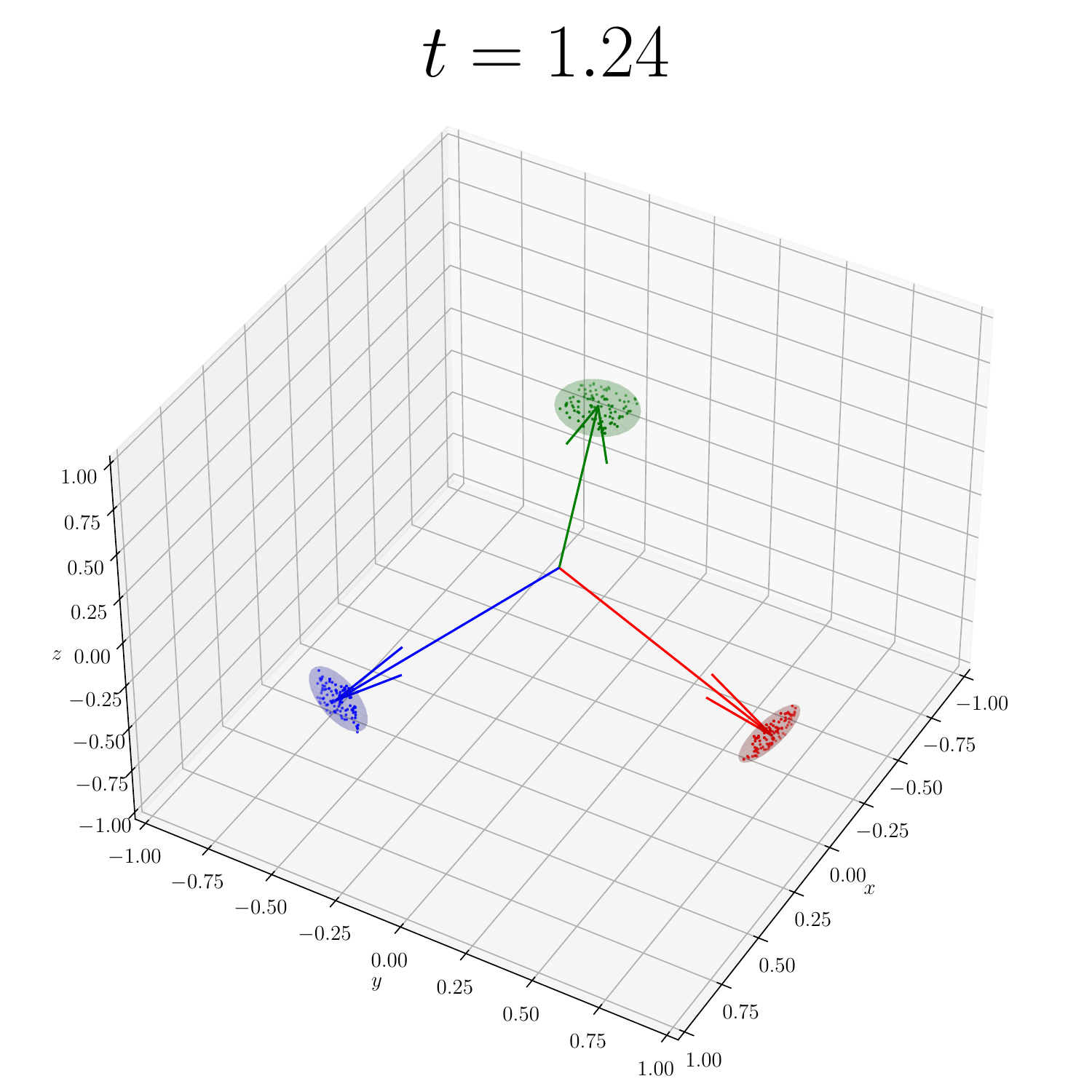}
    \includegraphics[width=0.32\columnwidth]{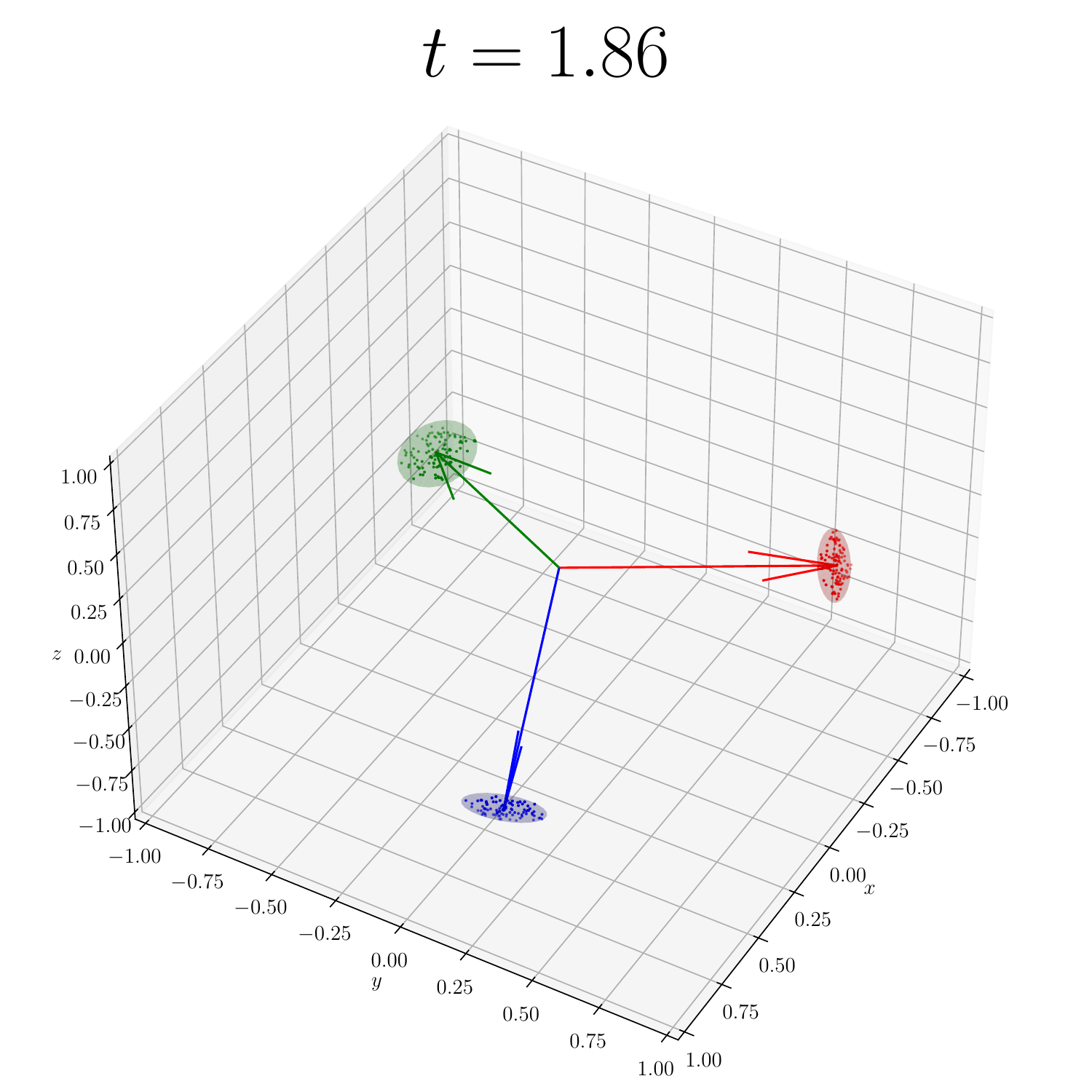}
    \includegraphics[width=0.32\columnwidth]{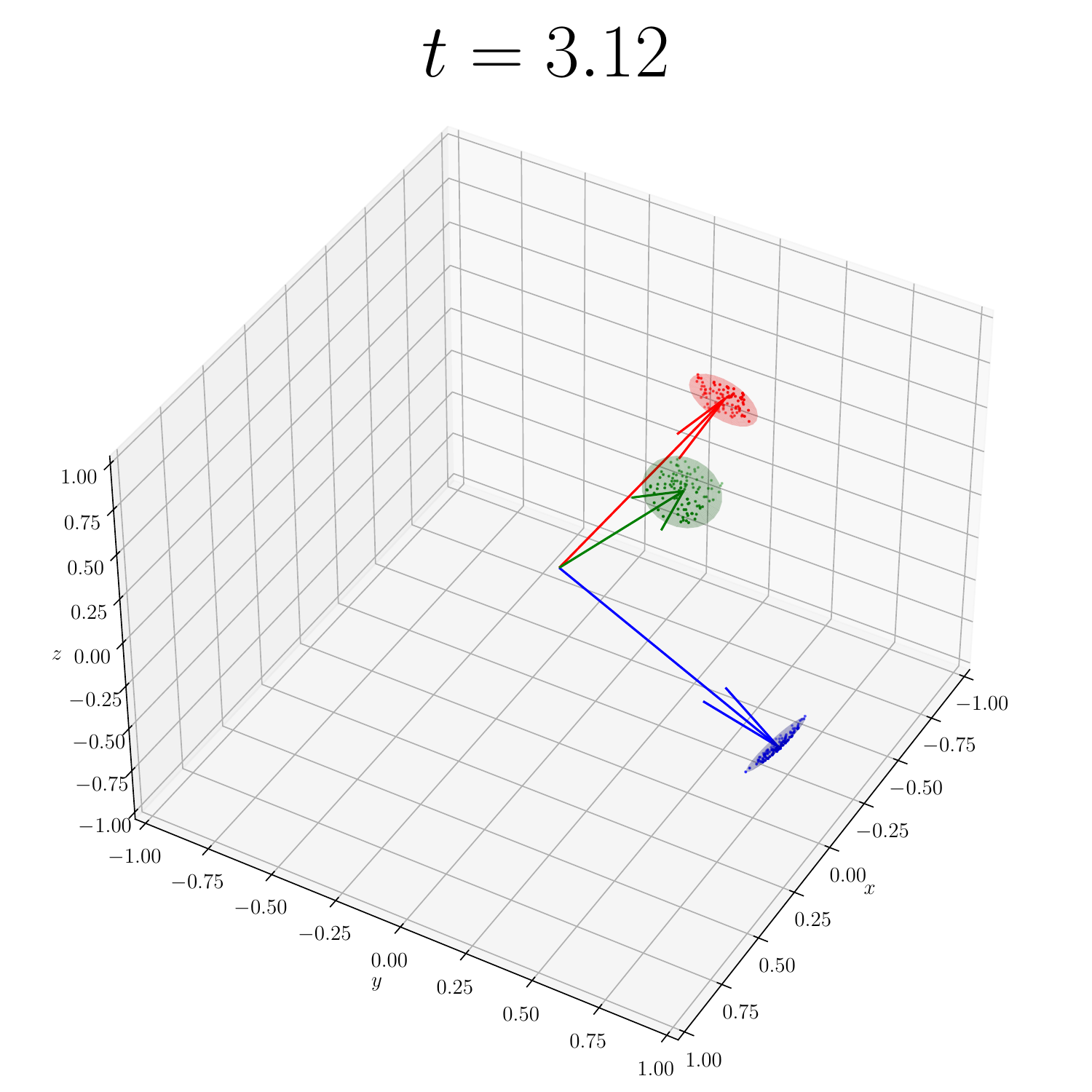}
    \includegraphics[width=0.32\columnwidth]{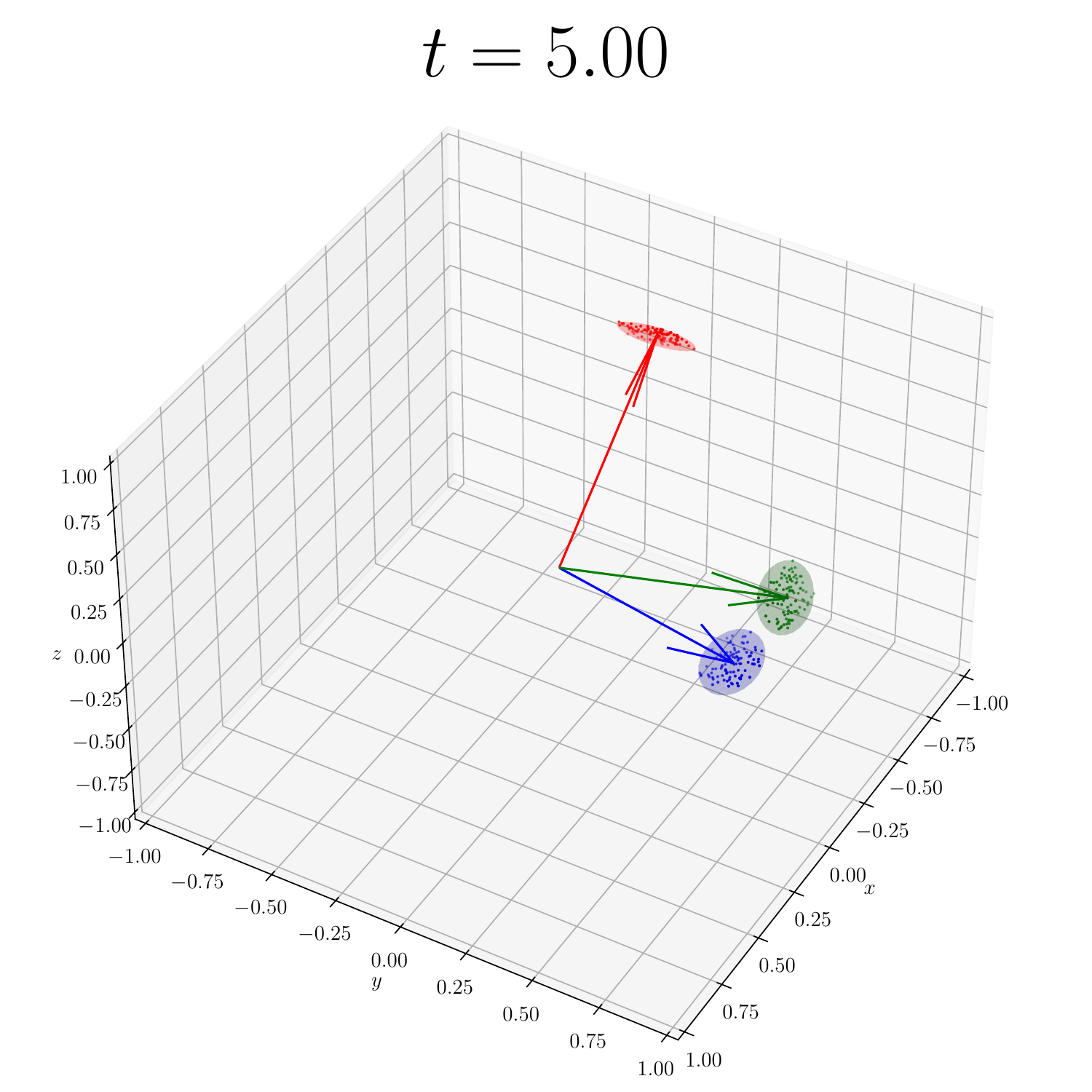}
    \caption{The reachable set for the orientation of a satellite evolving on $SO(3)$ is visualized as uncertainty sets around a coordinate axis. The system from Example~\ref{ex:SO3_reach} is simulated from the initial set $\{I\exp(X) : \sqrt{\<X,X\>_\frakg} \leq 0.1\}$, under the control mapping $\bfu:[0,5]\to\R^3$, $t\mapsto \bfu(t) = \left[\frac{5 - t}{5}, 1 - \left(\frac{t}{5}\right)^2, \sin\left(\frac{\pi t}{2}\right)\right]^T$. Since the system is nonexpansive, and there are no disturbances applied, the reachable set does not grow or shrink over time. 
    $100$ Monte Carlo simulations are also shown, empirically verifying their containment in the reachable set.
    }
    \label{fig:SO3fig}
    \vspace{-1em}
\end{figure}

\section{CONCLUSIONS}

In this work, we specialized existing results in coordinate-free contraction~\cite{SimpsonBullo_ContrRiemann:2014} to reductive homogeneous spaces equipped with an invariant Riemannian metric.
We showed how the Levi-Civita contraction condition can be simplified down to a standard matrix measure computation by lifting the problem to the Lie group. 
Using this result, we were able to provide a necessary condition for the existence of a contracting system, involving the underlying geometry rather than purely topological properties.
In future work, we plan to further investigate the connection between invariant contraction using the horizontal bundle $\Hor(G)$, horizontal contraction with respect to a horizontal Finsler-Lyapunov function on $G$~\cite{ForniSepulcre_DiffLyap:2013}, semi-contraction~\cite{JafarpourSVBullo_SemiContraction:2021}, and invariant Finsler metrics.
We plan to use this theory to study robustness in mechanical and robotic systems evolving on Lie groups.

\bibliographystyle{IEEEtran}
\bibliography{homogeneous}

\end{document}